\documentclass{article}
\usepackage{arxiv}
\usepackage[utf8]{inputenc} 
\usepackage[T1]{fontenc}    
\usepackage[table,xcdraw, dvipsnames]{xcolor}
\usepackage[colorlinks=true, linkcolor=blue, urlcolor=blue, citecolor=blue]{hyperref}
\usepackage{url}            
\usepackage{booktabs}       
\usepackage{amsfonts, amsmath, amssymb, bm}       
\usepackage{nicefrac}       
\usepackage{microtype}      
\usepackage{graphicx}
\usepackage{natbib}
\usepackage{doi}
\usepackage{algorithm}
\usepackage{algpseudocode}
\usepackage{tikz}
\usepackage{econometrics}
\usepackage{breqn}
\usepackage{enumerate}
\usepackage{subfig}
\usepackage[autostyle=false, style=english]{csquotes}
\usepackage{booktabs} 
\MakeOuterQuote{"}
\usepackage{amsthm}
\usepackage[toc,page]{appendix}
\usepackage{comment}

\theoremstyle{plain}
\newtheorem{theorem}{Theorem}
\newtheorem{corollary}{Corollary}

\newtheorem{lemma}{Lemma}
\newtheorem{assumption}{Assumption}

\theoremstyle{definition}
\newtheorem{definition}{Definition}
\newtheorem{example}{Example}
\newtheorem{remark}{Remark}

\newcommand{\one}{{\bf 1}}
\newcommand{\matdg}[1]{\left[ #1 \right]_{\mathrm{dg}}}
\newcommand{\fundmat}{\mZ_{\mQ}\,}
\newcommand{\ev}{\mathbb{E}}
\newcommand{\R}{\mathbb{R}}

\title{Multiplicative Path Weights in Networks: From Katz Connectivity to Mean First--Passage Times}

\author{
	\href{https://orcid.org/0009-0007-4945-6120}{\includegraphics[scale=0.06]{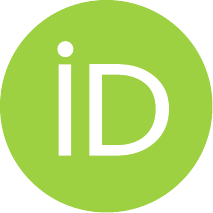}\hspace{1mm}Robbert~van~der~Burg} \\
	Department of Mathematics\\
	Vrije Universiteit Amsterdam\\
	\texttt{\href{mailto:r.vander.burg@vu.nl}{r.vander.burg@vu.nl}} \\
	\And
	\href{https://orcid.org/0000-0002-3389-2311}{\includegraphics[scale=0.06]{orcid.pdf}\hspace{1mm}Bernd~Heidergott} \\
	Department of Operations Analytics\\
	Vrije Universiteit Amsterdam\\
	\texttt{\href{mailto:b.f.heidergott@vu.nl}{b.f.heidergott@vu.nl}}
    \And
    \href{https://orcid.org/0009-0008-1891-2346}{\includegraphics[scale=0.06]{orcid.pdf}\hspace{1mm}Thao~Le}\\
	Department of Economics\\
	Vrije Universiteit Amsterdam\\
	\texttt{\href{mailto:t.p.t.le@vu.nl}{t.p.t.le@vu.nl}} \\
    \And
	\href{https://orcid.org/0000-0001-6585-4785}{\includegraphics[scale=0.06]{orcid.pdf}\hspace{1mm}Alessandro~Zocca} \\
	Department of Mathematics\\
	Vrije Universiteit Amsterdam\\
	\texttt{\href{mailto:a.zocca@vu.nl}{a.zocca@vu.nl}}
}

\renewcommand{\headeright}{ }
\renewcommand{\shorttitle}{Multiplicative Path Weights in Networks}

\hypersetup{
pdftitle={Multiplicative Path Weights in Networks},
pdfauthor={Robbert van der Burg, Bernd Heidergott, Thao Le, Alessandro Zocca}
}

\begin{document}
\maketitle

\begin{abstract}
	Signal dispersion, stress propagation in interbank systems, and transmission probabilities in epidemic and reliability models are all captured by studying paths on directed graphs where the path weight is defined as the product of its edge weights. Connectivity in such networks is naturally quantified by aggregating these multiplicative path weights across nodes. This paper provides a comprehensive treatise on connectivity and ranking measures based on multiplicative path weights. We demonstrate how standard all-path connectivity, such as the standard Katz centrality, can be extended beyond its classical spectral threshold. 
In addition, we derive closed-form expressions for connectivity based on first-hit paths and introduce a new version of Katz centrality.
Furthermore, for random edge weights, we derive resolvent-based expressions for the first and second moments, alongside variance bounds, under a path-indexed coupling framework. 
Finite irreducible Markov chains are a subclass of the models studied, and we show that the scaled matrix of first-hit path probabilities is the probability-generating function for first-passage and return times, and that differentiation yields the classical first-passage times. 
Numerical examples illustrate the application of our results in a variety of settings.
\end{abstract}

\keywords{Katz connectivity; Markov chains; resolvent; numerical methods for Markov chains; first-passage times.}

\section{Introduction}\label{sec:Intro}
Let $\mathcal G = (\mathcal V, \mathcal E)$ be a finite, directed, strongly connected graph with weights $\mQ(i,j)$ on each edge $(i,j) \in \mathcal E$. The weight of a path is the product of its edge weights and a \emph{first-hit path} from $i$ to $j$ is a path that does not visit $j$ before its final step. We study the \emph{accumulated first-hit weight matrix} $\mM$, whose entries $\mM(i,j) \in [0, +\infty]$ are the sum of the weights of all first-hit paths from $i$ to $j$.%

Our framework can be used in various applications: the expected discounted payoff paid at first entry into a designated regime \citep{hamilton1989new}, the probability that a transmission chain is realized in epidemic and reliability models \citep{wang2003,chakrabarti2008}, and stress propagation in interbank networks, stopped at the first arrival at a monitored creditor. These quantities are not additive, which distinguishes the present framework from \cite{le2026randomwalkstraversalcosts}, in which traversal costs sum along a path. 

When $\mQ$ is row-stochastic, the weight of a path is the probability that a random walker follows it, and $\mM$ becomes the product-accumulated analog of the matrix of mean first-passage times (MFPT). This matrix can be used directly as input to spectral clustering algorithms \citep{fouss2007random,berkhout2019analysis}, whose numerical bottleneck is precisely the cost of inverting a dense deviation matrix (see \cite{HunterMFPT} for an overview).
$\mM$ satisfies the implicit renewal equation
\begin{equation}\label{eq:first_step_decomp}
    \mM(i,j) = \mQ(i,j) + \sum_{k \neq j} \mQ(i,k)\mM(k,j), 
\end{equation}
or, in matrix form,
\begin{equation}\label{eq:basic recursion}
    (\mI-\mQ)\mM = \mQ(\mI - \matdg{\mM}),
\end{equation}
where $\matdg{\cdot}$ denotes the operator that sets all off-diagonal entries to zero. Solving \eqref{eq:basic recursion} for $\mM$ requires inverting $(\mI-\mQ)$, which is not always possible.

Write $\mR(\gamma) := (\mI - \gamma\mQ)^{-1}$ for the $\gamma$-scaled \textit{resolvent matrix}. For $\gamma < 1$ and $\mQ = \mP$ an irreducible, aperiodic Markov chain, $\mR(\gamma)$ equals the Neumann series $\sum_{n=0}^\infty \gamma^n \mP^n$ and is the discounted value function of a Markov reward process,
\begin{equation}\label{eq:MDP}
\sum_{n=0}^\infty \gamma^n  (\mP^n \vr )(i) =
\mathbb{E} \left[ \sum_{n=0}^\infty \gamma^n {\vr}_{X ( n ) } \; \mid \;X_0 = i  \right] ,
\end{equation}
where $\vr$ denotes the reward vector for the Markov chain $ \{ X_n : n \in \mathbb{N} \} $ with transition probability matrix $ \mP$. This probabilistic reading and the convergence of the series itself are lost once $\gamma \geq 1$. For Markov chains, the resolvent has a pole,
\begin{equation}\label{eq:limit1}
\lim_{\gamma \uparrow 1}(1-\gamma) \mR(\gamma) = \lim_{\gamma \uparrow 1}(1-\gamma)(\mI-\gamma\mP)^{-1} = \mPi,
\end{equation}
with $\mPi$ the ergodic projector of $\mP$, see \citep{puterman2014markov}. This is exactly why the classical treatment to study~\eqref{eq:MDP} of the non-discounted case ($\gamma=1$) falls back on the \textit{deviation matrix}, defined implicitly in the right-hand side of
\begin{equation}
(\mI - \mP + \mPi)^{-1} = \mPi + \sum_{n=0}^\infty (\mP^n - \mPi).\nonumber 
\end{equation}

The observation driving this paper is that $\mM(\gamma)$, i.e., the solution of \eqref{eq:basic recursion} with $\mQ$ replaced by $\gamma\mQ$, does \emph{not} inherit this pole; instead, it acts as a removable singularity. In transfer-function language, this is a pole-zero cancellation: the all-walks resolvent diverges exactly at the resonance $\gamma^\ast = 1/\rho(\mQ)$, the boundary treated elsewhere via descriptor systems~\citep{Misra_VanDooren_Syrmos_1995}, average-cost MDPs~\citep{Lamond_Puterman_1989}, and the Laplacian resolvent of network consensus~\citep{Olfati-Saber_Fax_Murray_2007}, yet the first-hit ratio remains finite and analytic straight through it.

The main contributions of this paper are summarized as follows:
\begin{enumerate}
\item We solve equation \eqref{eq:basic recursion} for $\mM$, deriving closed-form solutions in case $(\mI-\mQ)$ is invertible, and on the Perron-critical boundary ($\rho(\mQ)=1$), together with a proof of uniqueness.
\item 
In the Markov case, we identify $\mM(\gamma)$ as the probability-generating function of classical first-passage times and obtain their means by taking its derivative at $\gamma=1$. 

\item We provide simultaneous approximations including the leading error terms to the first-hit matrix, the classical mean first-passage time matrix, the deviation matrix, and the fundamental matrix via a single resolvent evaluation.

\item Centrality ranking based on Katz connectivity~\citep{katz1953new,sharkey2017control}, and more general $\alpha$-connectivity~\citep{Bonacich2001Eigenvector}, ranks a node by the $\alpha$-discounted number of walks that leave it. Typically $\alpha \in (0,1)$, ensuring that paths vanish geometrically fast.
We extend classical Katz connectivity in an interpretable way beyond its walk-sum convergence threshold into $1/\lambda_1<\alpha<1/|\lambda_2|$ using a convergent series of deviations from Perron-asymptotic walk counts together with the Perron term. The resulting scores can be negative and yield different node rankings, while retaining the pole at $\alpha=1/\lambda_1$.

\item 

A well-known feature (see, for instance, \cite{Arrigo_Grindrod_Higham_Noferini_2018}) of classical Katz connectivity is that counting \textit{all} walks includes re-counting self-loops and cycles in the network, inflating the centrality ranking of certain nodes. 
We introduce an alternative version of Katz connectivity (and ranking) based on first-hit paths that is of use in applications that are sensitive to re-counting self-loops.

\item We extend the model to a stochastic setting with random edge weights.
\end{enumerate}

The remainder of the paper is organized as follows. Section~\ref{sec:Model} formally introduces the model. In Sections~\ref{sec:inv} and~\ref{sec:Singular}, solutions to \eqref{eq:basic recursion} are derived in both the invertible and Perron-critical case. Section~\ref{sec:novel} develops a unifying resolvent-based approach. In Section~\ref{sec:Stochastic} we discuss the extension to the stochastic case, followed by Section~\ref{new katz section} which treats Katz connectivity specifically. Section~\ref{sec:Application} contains three applications, followed by a discussion of further research in Section~\ref{sec:Conclusion}.

\section{Graph Model}\label{sec:Model}
We consider a graph ${\cal G} = ({\cal V}, {\cal E})$, with node set ${\cal V}=\{1, \dots, n\}$ and edge set ${\cal E} \subseteq {\cal V}\times {\cal V}$ with $|{\cal E}|=m$. 
Let ${\cal A}$ denote the adjacency matrix of ${\cal G}$, i.e., ${\cal A} (i,j) = 1$ if $(i,j) \in {\cal E}$ and zero otherwise. A calligraphic letter is used to mark its importance and emphasize its relation to the graph ${\cal G}$.
We attach positive weights $\mQ \in \mathbb{R}_{+}^{n\times n}$ to the edges of the graph ${\cal G}$. Throughout the paper, we work with the following setup:

\textit{The graph $\cal G$ is finite, directed and strongly connected, and $\mQ \in \mathbb{R}_+^{n \times n}$ satisfies $\mQ(i,j)>0$ if and only if ${(i,j) \in \cal E}$. Self-loops $(i,i) \in \cal E$ are allowed. Consequently, $\mQ$ is nonnegative and irreducible.}

A path $ \sigma $ on $ {\cal G}$ from $ i$ to $ j $ is a finite sequence 
 ${\sigma = ((i_1,j_1), \dots, (i_l,j_l))}$ of links $ ( i_k , j_k ) \in {\cal E}$, for $ 1 \leq k \leq l $, with $ j_k = i_{k+1}$, for $ 1 \leq k \leq l-1 $, $ i_1 = i$ and $ j_l = j $. 
We denote by $ l (  \sigma )$ the {\em length of path} $ \sigma$, and
\[
 r ( \sigma )= \prod_{k=1}^{l (\sigma) } \mQ ( i_k,j_k ) 
\]
is {\em the weight of a path} $ \sigma$.
A path $ \sigma $ from $ i $ to $ j $ is called a \textit{first-hit path} if $ i_k \not = j$ for $ k=2, \ldots , l$, that is, if $ \sigma $ is a finite path going from $ i $ to $ j $ that does not visit $ j$ before reaching the end node.
We denote by $  {\cal P} ( i , j )  $ the set of all first-hit paths from $ i $ to $ j$, and the accumulated first-hit path weight matrix $ \mM $ is defined by
\begin{equation}\label{eq:accumulated_first_hit_path}
\mM  ( i , j ) := \sum_{ \sigma \in  {\cal P} ( i , j ) }
\prod_{k=1}^{l( \sigma) }  \mQ ( i_k,j_k ) ,
\end{equation}
for $1  \leq i , j \leq n $, with sums interpreted in $[0,+\infty]$. Under this convention, $\mM$ is an extended-real matrix in $\mathbb{R}^{n\times n}$ exactly when all these sums are finite. Decomposition at the first step returns~\eqref{eq:first_step_decomp}, where $0 \cdot(+\infty)=0$. When every entry of $\mM$ is finite, these identities result in~\eqref{eq:basic recursion}.

For two matrices $\mB, \mC$ of the same size, we define the Hadamard product of $\mB$ and $\mC$, $\mB \circ \mC$, as the element-wise product, that is, $(\mB \circ \mC)(i,j) := \mB(i,j) \mC(i,j)$. 
Depending on the choice of $ \mQ $,
our model covers various applications as explained in Table~\ref{tab:placeholder},
where $ \mP $ denotes a Markov transition matrix on $ {\cal G}$ and $ \mW $ an $ n \times n $ real-valued matrix of (possibly stochastic) weights.

\begin{table}[!t]
    \centering
    \caption{Overview of possible interpretations of the weight matrix $\mQ$.}
    \begin{tabular}{c|c|c}
            Model &  $\mQ $ &  $ \mM $  \\ \hline \hline
        scaling of adjacency matrix & $ \alpha {\cal A}$   &  first-hit Katz connectivity   \\
        link weighted graph & $ \mW = \mW \circ {\cal A} $   &   accumulated $ \mW$-weighted first-hit paths   \\
                Markov case &  $ \mP $ & first-hit probabilities\\
        weighted Markov case &  $ \mP \circ \mW $ & multiplicatively weighted first-passage functional      
    \end{tabular}
    \label{tab:placeholder}
\end{table}

\section{The Invertible Case}\label{sec:inv}
In the following, we consider the scaled resolvent $ {( \mI - \gamma \mQ )^{-1}}$ of $ ( \mI - \mQ) $, for $ \gamma  \in \mathbb{R}$.
The following well-known lemma guarantees that there are only finitely many values of $\gamma$ for which $(\mI-\gamma \mQ)$ fails to be invertible. For the proof of the lemma, we refer to \cite[Chapter 7.3]{Meyer2023Matrix}.

\smallskip

\begin{lemma}\label{lem: finite alpha}
    Let $\mQ$ have $ k$ distinct eigenvalues $ \lambda_1, \dots, \lambda_k $, for $k \leq n $, and spectral radius $ \rho ( \mQ )  =  \max_i |\lambda_i|$.
   Then, 
    \begin{enumerate}
        \item $\mI- \gamma \mQ$ is invertible for $\gamma \in \mathbb{R} \setminus \{ 1/ \lambda_1 , \ldots , 1 / \lambda_k \} $,

        \item the Neumann series $ \sum_{n=0}^\infty \gamma^n \mQ^n $ converges for $ {|\gamma| < 1/\rho ( \mQ ) }$.
    \end{enumerate}
\end{lemma}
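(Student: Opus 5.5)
Both parts reduce to spectral facts about $\mQ$, so the plan is to handle them separately using the eigenvalues $\lambda_1,\dots,\lambda_k$ (we read $1/\lambda_i$ as absent when $\lambda_i=0$).

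\textbf{Part 1.} If $\gamma=0$, then $\mI-\gamma\mQ=\mI$ is trivially invertible. For $\gamma\neq 0$ we factor
\[
\mI-\gamma\mQ=-\gamma\bigl(\mQ-\gamma^{-1}\mI\bigr).
\]
This matrix is singular exactly when $\gamma^{-1}$ is an eigenvalue of $\mQ$, that is, when $\det(\mQ-\gamma^{-1}\mI)=0$. Equivalently, $\gamma=1/\lambda_i$ for some nonzero $\lambda_i$. Only real values of $\gamma$ are considered, so only real eigenvalues can actually produce a singularity. Excluding the finite set $\{1/\lambda_1,\dots,1/\lambda_k\}$ therefore guarantees invertibility.

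\textbf{Part 2.} We bring $\mQ$ to Jordan form, $\mQ=\mS\mJ\mS^{-1}$, so that $\gamma^n\mQ^n=\mS(\gamma\mJ)^n\mS^{-1}$. It then suffices to show that the series converges on each Jordan block $\lambda\mI_m+\mN$ with $\mN$ nilpotent. For $n\ge m$,
\[
(\lambda\mI_m+\mN)^n=\sum_{j=0}^{m-1}\binom{n}{j}\lambda^{n-j}\mN^j .
\]
Each entry of $\gamma^n$ times this block is bounded by a polynomial in $n$ multiplied by $(|\gamma||\lambda|)^{n}$, up to constants. Since $|\gamma||\lambda|\le|\gamma|\rho(\mQ)<1$, the geometric factor dominates and the series converges absolutely.

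A Jordan-free alternative is Gelfand's formula $\|\mQ^n\|^{1/n}\to\rho(\mQ)$. It gives $\|\gamma^n\mQ^n\|\le r^n$ for large $n$, for any fixed $r$ with $|\gamma|\rho(\mQ)<r<1$, so the series converges by the root test. The partial sums $S_N$ satisfy
\[
(\mI-\gamma\mQ)S_N=\mI-\gamma^{N+1}\mQ^{N+1}\to\mI,
\]
which shows that the limit equals $(\mI-\gamma\mQ)^{-1}$.

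The only step needing care is the growth bound in Part 2: the polynomial factors $\binom{n}{j}$ from nontrivial Jordan blocks must be absorbed by the geometric decay. Either the ratio test on $n^{j}s^{n}$ with $s<1$, or Gelfand's formula, handles this directly.
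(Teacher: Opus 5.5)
Your proof is correct and complete. The paper does not prove this lemma itself but defers to \cite[Chapter 7.3]{Meyer2023Matrix}, and your argument is the standard textbook one: $\mI-\gamma\mQ$ is singular iff $\gamma^{-1}$ is an eigenvalue, and the Neumann series converges absolutely (via Jordan blocks or Gelfand's formula) with limit identified through $(\mI-\gamma\mQ)S_N=\mI-\gamma^{N+1}\mQ^{N+1}$. Your Gelfand route is also the tool the paper itself uses in Corollary~\ref{ass:A2}.

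One small point: the constant $|\lambda|^{-j}$ in your Jordan-block bound is undefined when $\lambda=0$. Such blocks are nilpotent, however, so their terms vanish for $n\geq m$. This also covers the case $\rho(\mQ)=0$, where $1/\rho(\mQ)$ is read as $\infty$.
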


Lemma~\ref{lem: finite alpha} shows that the inverse of $( \mI - \gamma \mQ )$ exists for almost all values of $ \gamma$, whereas the Neumann series only converges for a strict subset of those values, establishing that invertibility of $ ( \mI - \gamma \mQ ) $ is a strictly weaker condition than the convergence of the Neumann series $ \sum_{n=0}^\infty \gamma^n \mQ^n $. 
For $ \gamma $ in the convergence radius of the Neumann series, the closed form $ ( \mI - \gamma \mQ )^{-1} $ coincides with the discounted walk-count generating function; beyond it, the closed form retains algebraic meaning as the unique analytic continuation of the generating function, although the path-sum interpretation no longer applies.

When $\mQ = \mP$ is a Markov chain, solving \eqref{eq:basic recursion} yields a multiplicative equivalent to the classical \textit{mean first-passage time} matrix, whose $(i,j)$-th entry gives the expected number of steps a random walker takes to reach node $j$ for the first time starting from node $i$.
As the next lemma shows, 
Markov chains $ \mP $ constitute a particular class, as $ ( \mI - \mP ) $ is singular without exception.

\smallskip

\begin{lemma}\label{lem: MC}
If $ \mQ=\mP $ is a Markov chain, then, for $ \epsilon > 0 $ sufficiently small,  $( \mI - \gamma \mP )$ is invertible for $ {\gamma  \in  ( 1 - \epsilon , 1 + \epsilon ) \setminus \{ 1 \} }$, and fails to be invertible at $ \gamma =1 $.
\end{lemma}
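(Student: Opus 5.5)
The plan has two parts: show that $(\mI-\mP)$ is singular, then show that the singular values of $\gamma$ are isolated around $1$ using Lemma~\ref{lem: finite alpha}.

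\emph{Singularity at $\gamma=1$.} Since $\mP$ is row-stochastic, $\mP\one=\one$, so $(\mI-\mP)\one=\mathbf{0}$ with $\one\neq \mathbf{0}$. Hence $\one$ lies in the kernel of $(\mI-\mP)$, and the matrix fails to be invertible at $\gamma=1$. Equivalently, $1$ is an eigenvalue of $\mP$, so $\gamma=1=1/1$ belongs to the exceptional set $\{1/\lambda_1,\dots,1/\lambda_k\}$ in Lemma~\ref{lem: finite alpha}.

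\emph{Invertibility near $1$.} By Lemma~\ref{lem: finite alpha}, $\mI-\gamma\mP$ is invertible for every real $\gamma$ outside the finite set $S=\{1/\lambda_i : \lambda_i\neq 0\}$. Note that zero eigenvalues contribute no exceptional $\gamma$, since $\det(\mI-\gamma\mP)=\prod_i(1-\gamma\lambda_i)$. Because $S$ is finite, the quantity $\delta:=\min\{|s-1| : s\in S,\ s\neq 1\}$ is strictly positive, with $\delta=+\infty$ if $1$ is the only element of $S$. Choosing any $\epsilon$ with $0<\epsilon<\delta$ ensures that $(1-\epsilon,1+\epsilon)\cap S=\{1\}$, which gives the claim.

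For a more concrete bound one could use $\rho(\mP)=1$, so that $|\lambda_i|\le 1$ for all $i$. For real $\gamma>1$, any real eigenvalue $\lambda$ with $\gamma\lambda=1$ would need $\lambda=1/\gamma<1$. This shows that $\epsilon$ depends on how close the other real eigenvalues of $\mP$ are to $1$. No step is a real obstacle. The only subtlety is that $\epsilon$ must be allowed to depend on the spectrum of $\mP$ rather than being uniform, and the finiteness argument above handles this.
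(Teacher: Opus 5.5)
Your proof is correct and follows the paper's argument: $\one$ lies in the kernel of $\mI-\mP$, and the finiteness of the exceptional set from Lemma~\ref{lem: finite alpha} gives a punctured neighborhood of $\gamma=1$ on which $\mI-\gamma\mP$ is invertible. Your remark that zero eigenvalues contribute no exceptional $\gamma$, via $\det(\mI-\gamma\mP)=\prod_i(1-\gamma\lambda_i)$, makes explicit a point the paper leaves implicit.
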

\begin{proof}
For $ \gamma =1 $, the null space of $ ( \mI - \mP ) $ contains the vector of length $n $ with all entries equal to one, which shows that $ ( \mI - \mP ) $ is not invertible.
Since there are only finitely many singularities by Lemma~\ref{lem: finite alpha}, including $\gamma=1$, there exists an open punctured neighborhood around $ \gamma =1 $ where $ ( \mI - \gamma \mP ) $ is invertible.
\end{proof}

As Lemma~\ref{lem: finite alpha} shows, $\mI - \mQ $ is typically invertible, with the Markovian case being a prominent exception in Table~\ref{tab:placeholder}.

\subsection{General Solution}
In the invertible case, the solution of \eqref{eq:basic recursion} can be obtained quickly and is presented in the following theorem; its proof can be found in Section~\ref{app:proof_thm_1} of the Appendix.
For ease of notation, we let $\mR$ denote the inverse of $ \mI - \mQ$, i.e.
\begin{equation}\label{eq:resolventgamma}
\mR := (\mI-\mQ)^{-1}.
\end{equation}

\begin{theorem}\label{thm: sol invertible}
    Suppose that $\mM$ is finite and $\mI-\mQ$ is invertible. Then $\mR(j,j)\neq 0$ for all $j \in {\cal V}$, and the solution of \eqref{eq:basic recursion} is given by
    \[
        \mM (i,j) = \frac{\mR(i,j)}{\mR(j,j)},\, i \neq j, \;\;\; \text{and} \;\;\; \mM (i,i) = 1- \frac{1}{\mR(i,i)}
    \]
for all $i \in {\cal V}$, which, in matrix notation, reads as
\[
       {\mM = (\mR - \mI)  ( \matdg{\mR} )^{-1}}.
\]
\end{theorem}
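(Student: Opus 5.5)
Since $\mM$ is finite, the first-step decomposition \eqref{eq:first_step_decomp} holds entrywise and yields \eqref{eq:basic recursion}. Because $\mI-\mQ$ is invertible, we multiply on the left by $\mR$ and obtain
\[
\mM = \mR\mQ(\mI-\matdg{\mM}).
\]
From $\mR(\mI-\mQ)=\mI$ we have $\mR\mQ=\mR-\mI$, so
\[
\mM=(\mR-\mI)(\mI-\matdg{\mM}).
\]
The factor $\mI-\matdg{\mM}$ is diagonal, so column $j$ of this identity reads $\mM(i,j)=(\mR(i,j)-\delta_{ij})(1-\mM(j,j))$.

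\textbf{Diagonal entries.} Taking $i=j$ gives $\mM(j,j)=(\mR(j,j)-1)(1-\mM(j,j))$. Rearranging, this becomes $\mR(j,j)(1-\mM(j,j))=1$. This relation simultaneously shows that $\mR(j,j)\neq 0$ and that $\mM(j,j)=1-1/\mR(j,j)$; in particular $\mM(j,j)\neq1$.

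\textbf{Off-diagonal entries.} Taking $i\neq j$ gives $\mM(i,j)=\mR(i,j)(1-\mM(j,j))=\mR(i,j)/\mR(j,j)$. In matrix form the identity $\mI-\matdg{\mM}=(\matdg{\mR})^{-1}$ then yields $\mM=(\mR-\mI)(\matdg{\mR})^{-1}$.

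\textbf{Uniqueness.} The same manipulation applies to any finite solution of \eqref{eq:basic recursion}, since it only used that equation and the invertibility of $\mI-\mQ$. Hence every finite solution equals the displayed matrix, and this matrix coincides with the path-sum $\mM$.

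\textbf{Main obstacle.} The only delicate point is the claim $\mR(j,j)\neq0$. It would fail if we tried to divide first, but it falls out of the diagonal identity above. That identity in turn rests on finiteness of $\mM$, which is needed so that \eqref{eq:basic recursion} holds as an identity of real matrices without $\infty-\infty$ issues.
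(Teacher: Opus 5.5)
Your proposal is correct and follows the paper's proof essentially step for step: you left-multiply \eqref{eq:basic recursion} by $\mR$, use $\mR\mQ=\mR-\mI$, and read off the diagonal identity $\mR(j,j)(1-\mM(j,j))=1$ (the paper writes it as $\mR(i,i)\mM(i,i)=\mR(i,i)-1$) to get $\mR(j,j)\neq 0$ before substituting into the off-diagonal entries. Your uniqueness remark is a valid addition that the paper's appendix proof leaves implicit.
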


Theorem~\ref{thm: sol invertible} does not require irreducibility of $\mQ$. For ${\mQ=\mP}$, the off-diagonal identity in Theorem~\ref{thm: sol invertible} agrees with the known generating-function identity for the first passage times, see \cite[Section~2.5.2]{aldousfill}.

The following assumption is used from here on, except in Section~\ref{sec:Singular}, where Theorem~\ref{thm:sol_singular} proves finiteness.

\smallskip

\begin{assumption}\label{ass:finite}
The first-hit matrix $\mM$ is finite. When the scaling $\gamma\mQ$ is considered, this is understood to hold for the values of $\gamma$ under consideration.
\end{assumption}

\smallskip

\begin{corollary}
    A sufficient condition for the finiteness of $\mM$ is $\rho(\mQ)<1$, such that the Neumann series $\sum_{n\geq0}\mQ^n$ converges, $\mR$ exists, and every first-hit sum is bounded by the corresponding all-walk sum.
\end{corollary}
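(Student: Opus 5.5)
The plan is to establish the three claims in the order the corollary lists them: convergence of the Neumann series, existence of $\mR$, and the termwise domination of first-hit sums by all-walk sums. Assume $\rho(\mQ)<1$. Part 2 of Lemma~\ref{lem: finite alpha}, applied with $\gamma=1$ (admissible since $1<1/\rho(\mQ)$), gives convergence of $\sum_{n\ge 0}\mQ^n$. Because $\mQ$ is entrywise nonnegative, every term $\mQ^n$ is nonnegative, so the convergence is absolute entrywise. Convergence forces $\mQ^n\to 0$, and the telescoping identity $(\mI-\mQ)\sum_{n=0}^N\mQ^n=\mI-\mQ^{N+1}$ then shows that the limit is a two-sided inverse of $\mI-\mQ$. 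Hence $\mR$ exists and
\[
\mR=\sum_{n\ge0}\mQ^n ,
\]
with every entry finite and nonnegative. (Alternatively, part 1 of Lemma~\ref{lem: finite alpha} gives invertibility directly, since no eigenvalue equals $1$.)

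Next I will give the combinatorial reading of $\mQ^n$. Expanding the matrix product,
\[
\mQ^n(i,j)=\sum_{i=i_1,i_2,\dots,i_n,i_{n+1}=j}\ \prod_{k=1}^n \mQ(i_k,i_{k+1}).
\]
A summand is nonzero exactly when every pair $(i_k,i_{k+1})$ lies in $\mathcal E$. So $\mQ^n(i,j)$ equals the sum of $r(\sigma)$ over all paths $\sigma$ of length $n$ from $i$ to $j$, and $\sum_{n\ge1}\mQ^n(i,j)$ is the total weight of all paths from $i$ to $j$ of positive length. I use only $n\ge1$ because a first-hit path has length at least one.

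The key step is the inclusion $\mathcal P(i,j)\subseteq\{\text{paths from } i \text{ to } j \text{ of length}\ge1\}$. Every path in the larger set has weight $r(\sigma)\ge0$, so the sums in $[0,+\infty]$ are monotone under restriction of the index set. This gives
\[
\mM(i,j)\le\sum_{n\ge1}\mQ^n(i,j)=(\mR-\mI)(i,j)<\infty
\]
for all $i,j$, so $\mM$ is finite and Assumption~\ref{ass:finite} holds.

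The main, though mild, obstacle is bookkeeping rather than analysis. One point is that a single path may be counted exactly once in $\mQ^{l(\sigma)}$, which the sequence expansion above guarantees. The other is handling the diagonal case $i=j$: a first-hit path from $j$ to $j$ is still a path of length $\ge1$, so it is covered by the same bound with $n\ge1$. Nonnegativity of $\mQ$ is essential throughout, since it justifies both the rearrangement of the series and the monotonicity of the sums.
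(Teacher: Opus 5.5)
Your proposal is correct and takes the same approach as the paper. The paper gives no separate proof; its justification is embedded in the statement itself: the Neumann series converges, $\mR$ exists, and each first-hit sum is dominated by the corresponding all-walk sum, which is $(\mR-\mI)(i,j)$. You spell out the details the paper leaves implicit: the telescoping identity, the walk expansion of $\mQ^n$, and monotonicity of nonnegative sums under restriction of the index set.
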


Without this assumption, the extended-real identities \eqref{eq:first_step_decomp} still hold, but the matrix algebra is not available, and a real solution of \eqref{eq:basic recursion} need not coincide with the path sums.

\smallskip

\begin{example}
    Take, for instance, the nonnegative, irreducible matrix
    \[
        \mQ = \begin{pmatrix}
            0.6 & 0.6 & 1\\
            0.6& 0.6& 1\\
            1&0&0
        \end{pmatrix}.
    \]
    Direct inversion gives
    \[
        \mR = (\mI-\mQ)^{-1} = \begin{pmatrix}
            -1/3& -1/2 & -5/6\\
            -4/3& 1/2 & -5/6\\
            -1/3& -1/2& 1/6
        \end{pmatrix}.
    \]
    Every cycle avoiding node $3$ has weight below one (the loops at nodes $1$ and $2$ have weight $0.6$, the round trip $1\to2\to1$ has weight $0.36$). Yet closed walks within $\{1,2\}$ may combine these cycles, and the total weight of first-hit paths from $1$ to $3$ of length $r+1$ equals $1.2^{\,r}$. Hence $\mM(i,3)=+\infty$. The formulas of Theorem~\ref{thm: sol invertible}, applied formally, return $\mR(i,3)/\mR(3,3)=-5$ for $i=1,2$ and $1-1/\mR(3,3)=-5$ for $i=3$: a negative value for a sum of positive terms, which nevertheless solves the first-step equations. In contrast, the first column is finite and equals $(4,4,1)^\top$, in agreement with the formulas. The resolvent formulas are therefore only meaningful for path sums when $\mM$ is finite.

\end{example}

We now introduce the resolvent with the intent of extending the results of Theorem~\ref{thm: sol invertible},
\begin{align*}
\mR ( \gamma)  := (\mI- \gamma  \mQ)^{-1},
\end{align*}
where $\mQ$ from \eqref{eq:resolventgamma} is now scaled by a factor $\gamma$. For $\gamma \geq 0$, we write $\mM(\gamma)$, with entries $\mM(i,j;\gamma)$, for the first-hit matrix of $\gamma \mQ$; each entry is a power series in $\gamma$ with nonnegative coefficients.

\smallskip

The pole at $ \gamma = 1 $ is called the {\em resonance threshold} \citep{estrada2010network, van2023graph}; more generally, 
each pole $ \gamma = 1 / \lambda_i $, for $ \lambda_i  $ an eigenvalue of the matrix $\mQ $, is a resonance of the discounted closed-walk count associated with that eigenvalue.
The smallest value for $ \gamma $ at which we encounter a resonance is $ \gamma  = 1 / \rho ( \mQ ) $, with 
$\rho ( \mQ )$ denoting the spectral radius of $ \mQ$, which is equal to one for every stochastic matrix.

\smallskip

\begin{example}
In epidemic models on a graph $\mathcal{G}$ with transmission probability $\gamma \in (0,1)$ per contact, the cumulative expected infections starting from seed $i$ is $\sum_{k=0}^\infty \gamma^k (\mathcal{A}^k \one)_i = (\mR(\gamma)\one)_i$. The product $R_0 := \gamma \rho(\mathcal{A})$ acts as the basic reproduction number \citep{vandendriessche2002}:
\begin{itemize}
    \item $R_0 < 1$: Outbreaks die out, the Neumann series converges and $\mR(i,i;\gamma)$ remains finite.
    \item $R_0 = 1$: The epidemic threshold, $\mR(i,i;\gamma)$ exhibits a pole, reflecting infinite cumulative cases.
    \item $R_0 > 1$: Exponential growth, the series diverges, and $\mR(i,i;\gamma)$ persists only via analytic continuation.
\end{itemize}
The pole at $R_0 = 1$ directly recovers the spectral threshold of linearized SIS dynamics on networks from walk-counting principles \citep{wang2003,chakrabarti2008}.
\end{example}

\section{The Perron-Critical case}\label{sec:Singular}
While the previous results hold in case $\gamma < 1/\rho(\mQ)$ and the $\mI-\gamma\mQ$ is invertible, a critical shift occurs at the boundary. As $\gamma$ approaches 1, the matrix $\mI-\gamma\mQ$ becomes singular, necessitating a distinct analysis in the non-invertible regime. We therefore direct our focus to the \textit{Perron-critical case} where $\rho(\mQ)=1$, in which case $\mI -\mQ$ is singular. By the Perron-Frobenius theorem, eigenvalue one is algebraically simple, with real right and left eigenvectors $\vx, \vy >0$.

\smallskip

\begin{lemma}\label{lem:diagonals non invertible}
    Assume $\rho(\mQ)=1$, and let $\mM$ satisfy Assumption~\ref{ass:finite}. Then, $\matdg{\mM} = \mI$.
\end{lemma}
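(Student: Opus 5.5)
Since $\mM$ is finite by Assumption~\ref{ass:finite}, I will treat \eqref{eq:basic recursion} as a genuine real-matrix identity and multiply it on the left by the positive left Perron vector $\vy^\top$, which satisfies $\vy^\top\mQ = \vy^\top$ because $\rho(\mQ)=1$. The left-hand side becomes $\vy^\top(\mI-\mQ)\mM = 0$. The right-hand side becomes
\[
\vy^\top\mQ(\mI-\matdg{\mM}) = \vy^\top(\mI-\matdg{\mM}),
\]
so $\vy^\top(\mI-\matdg{\mM})=0$. Since $\mI-\matdg{\mM}$ is diagonal, the $j$-th component of this row vector is $\vy(j)\bigl(1-\mM(j,j)\bigr)$. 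Positivity of $\vy$ then forces $\mM(j,j)=1$ for every $j$, which is exactly $\matdg{\mM}=\mI$.

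The step that needs the most care is making sure \eqref{eq:basic recursion} really holds as a real matrix equation. The paper notes that the extended-real first-step identities \eqref{eq:first_step_decomp} pass to matrix form precisely when every entry of $\mM$ is finite, and that is what Assumption~\ref{ass:finite} guarantees.

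The remaining ingredients are the Perron--Frobenius facts quoted at the start of Section~\ref{sec:Singular}: $\mQ$ is nonnegative and irreducible, so there is a strictly positive left eigenvector $\vy$ for the eigenvalue $\rho(\mQ)=1$. With these in hand there is no genuine obstacle.
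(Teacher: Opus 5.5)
Your proposal is correct and matches the paper's proof: left-multiply \eqref{eq:basic recursion} by the positive left Perron vector $\vy^\top$, use $\vy^\top\mQ=\vy^\top$ to reduce the identity to $\vy^\top(\mI-\matdg{\mM})=0$, and conclude $\matdg{\mM}=\mI$ from $\vy>0$. Your explicit remark that Assumption~\ref{ass:finite} is what makes \eqref{eq:basic recursion} a genuine real-matrix identity is a useful clarification that the paper leaves implicit.
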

\begin{proof}
    Continuing from \eqref{eq:basic recursion}, multiplying both sides of the equation on the left by $\vy^\top$ gives \[\vy^\top(\mI - \mQ)\mM = \vy^\top\mQ ( \mI - \matdg{\mM}),\] and combining with $\vy^\top \mQ = \vy^\top$ gives us $
        {\vy^\top (\mI - \matdg{\mM})={\bf 0}}$. 
    Since $\vy>0$, $\mI - \matdg{\mM} = {\bf 0}$, resulting in $\matdg{\mM} = \mI$.
    \end{proof}
In the classical weighted Markovian graph setting, solving $\mM$ requires inverting $\mI-\mP$ through a generalized inverse approach, using, for instance, the fundamental matrix or the group inverse \citep{kemeny1976finite,meyer1975role,hunter1}. This is also the route taken in the additive traversal-cost framework of \cite{le2026randomwalkstraversalcosts}. This same generalized inverse approach can be followed in our multiplicative setting, see Appendix~\ref{sec:gen_inverse_approach_appendix}, but a more direct and elegant approach exists, which we present in the following theorem.

\smallskip

\begin{theorem}\label{thm:sol_singular}
    Assume $\rho(\mQ)=1$, and let $\vx>0$ satisfy $\mQ \vx = \vx$. Then,
    \begin{equation}\label{sol_perron_critical}
        \mM(i,j) = \frac{\vx_i}{\vx_j} \qquad \text{for all } i,j \in {\cal V}.
    \end{equation}
\end{theorem}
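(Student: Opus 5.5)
The plan is to reduce to the Markov case by a diagonal similarity (Doob $h$-transform) and then use that first-hit probabilities of an irreducible finite chain equal one. Note that the theorem does not assume Assumption~\ref{ass:finite}; finiteness will be a by-product of the argument.

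\emph{Reduction to a stochastic matrix.} Set $\mD := \mathrm{diag}(\vx)$ and $\mP := \mD^{-1}\mQ\mD$, so that $\mP(i,j) = \mQ(i,j)\vx_j/\vx_i$. Since $\vx>0$ and $\mQ\vx=\vx$, the matrix $\mP$ is nonnegative with row sums $(\mQ\vx)_i/\vx_i = 1$. It also has the same zero pattern as $\mQ$, so it is an irreducible stochastic matrix on $\mathcal G$.

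For any path $\sigma$ from $i$ to $j$, the product of $\mP$-weights telescopes:
\[
\prod_{k}\mP(i_k,j_k) = \frac{\vx_j}{\vx_i}\prod_k \mQ(i_k,j_k).
\]
The set $\mathcal P(i,j)$ of first-hit paths depends only on the graph, not on the weights. Summing over it therefore gives, entrywise in $[0,+\infty]$,
\[
\mM_{\mQ}(i,j) = \frac{\vx_i}{\vx_j}\,\mM_{\mP}(i,j).
\]
This identity holds for nonnegative series without any finiteness assumption.

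\emph{The Markov case.} It remains to show $\mM_{\mP}(i,j)=1$ for all $i,j$, including $i=j$. For a Markov chain $X$ with transition matrix $\mP$ started at $i$, the weight of a first-hit path $\sigma\in\mathcal P(i,j)$ equals the probability that the chain follows exactly $\sigma$ during its first $l(\sigma)$ steps. The events "the chain follows $\sigma$" for distinct $\sigma\in\mathcal P(i,j)$ are disjoint, since a first-hit path is determined by the trajectory up to the first time $j$ is reached after time $0$. Their union is the event $\{\tau_j^+<\infty\}$, where $\tau_j^+ := \min\{n\ge1: X_n=j\}$. Hence $\mM_{\mP}(i,j)=\mathbb P_i(\tau_j^+<\infty)$.

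\emph{Main obstacle.} The key step is showing that this probability equals one. For a finite irreducible chain every state is recurrent, so $\mathbb P_i(\tau_j^+<\infty)=1$ for all $i,j$. To keep the argument self-contained, I would use a geometric-trials bound. By irreducibility, from each state $k$ there is a path to $j$ of length at most $n$ with probability at least some $\delta>0$, taking the minimum over finitely many $k$. The Markov property then gives $\mathbb P_i(\tau_j^+>mn)\le(1-\delta)^m\to0$. For $i=j$ the same bound applies to the first step onward, so the return time is finite almost surely. Consequently $\mM_{\mP}\equiv \one\one^\top$, and therefore $\mM_{\mQ}(i,j)=\vx_i/\vx_j$.

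As a consistency check, the result is finite (so Assumption~\ref{ass:finite} holds automatically), has unit diagonal in agreement with Lemma~\ref{lem:diagonals non invertible}, and satisfies \eqref{eq:first_step_decomp}. Indeed,
\[
\mQ(i,j)+\sum_{k\ne j}\mQ(i,k)\frac{\vx_k}{\vx_j} = \frac{(\mQ\vx)_i}{\vx_j} = \frac{\vx_i}{\vx_j}.
\]
If a purely algebraic uniqueness argument is also wanted, note that any finite solution $\mM$ of \eqref{eq:basic recursion} with $\matdg{\mM}=\mI$ has $j$-th column $\vm$ solving $\vm = \mQ_{(j)}\vm + \mQ e_j$, where $\mQ_{(j)}$ is $\mQ$ with column $j$ zeroed. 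This column-deleted matrix has spectral radius strictly less than $1$ by irreducibility and Perron--Frobenius, so $\mI-\mQ_{(j)}$ is invertible and the solution is unique.
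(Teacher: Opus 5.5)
Your proposal is correct and follows the paper's own route. Both use the Doob transform $\mQ(i,k)\vx_k/\vx_i$ to get an irreducible stochastic matrix, telescope the path weights over $\mathcal{P}(i,j)$, and identify the resulting sum with $\mathbb{P}_i(\tau_j^+<\infty)=1$, which also yields finiteness. Your geometric-trials bound makes the recurrence step self-contained where the paper only cites it, and your column-deleted uniqueness remark is a valid alternative to the kernel argument of Corollary~\ref{cor:uniqueness}; indeed $\rho(\mQ_{(j)})<1$ alone already gives finiteness, since $\sum_{n\geq0}\mQ_{(j)}^n\mQ e_j$ enumerates exactly the first-hit paths into $j$.
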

\begin{proof}
Define $\widetilde\mP(i,k):=\mQ(i,k)\,\vx_k/\vx_i$. The identity $\mQ\vx=\vx$ shows that every row of $\widetilde\mP$ sums to one, and $\widetilde\mP$ has the support of $\mQ$; it is therefore the transition matrix of a finite irreducible Markov chain $(X_n)$. For a target $j$ let $\tau_{ij}:=\inf\{n\geq1:X_n=j\mid X_0=i\}$. For a finite irreducible chain, $\mathbb P(\tau_{ij}<\infty)=1$ for every $i$, including $i=j$. The events corresponding to first-hit paths from $i$ to $j$ form a disjoint decomposition of $\{\tau_{ij}<\infty\}$, and along each such path, $\sigma$, the telescoping product gives
\[
\prod_{k=1}^{l(\sigma)}\widetilde\mP(i_k,j_k)=\frac{\vx_j}{\vx_i}\prod_{k=1}^{l(\sigma)}\mQ(i_k,j_k).
\]
Summing these nonnegative terms over ${\cal P}(i,j)$ gives
\[1=\mathbb P(\tau_{ij}<\infty)=\frac{\vx_j}{\vx_i}\,\mM(i,j),\]
which proves finiteness and concludes the proof.
\end{proof}

The proof uses neither aperiodicity nor the left Perron vector nor the notion of generalized inverses (which are needed in the additive model). However, irreducibility is essential.

\smallskip

\begin{corollary}\label{cor:uniqueness}
    Under the assumptions of Theorem~\ref{thm:sol_singular}, equation~\eqref{eq:basic recursion} has exactly one real solution, namely the matrix \eqref{sol_perron_critical}.
\end{corollary}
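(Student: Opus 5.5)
Existence is already settled: by Theorem~\ref{thm:sol_singular} the path-sum matrix $\mM$ is finite and equals $\mM(i,j)=\vx_i/\vx_j$, and the first-step decomposition \eqref{eq:first_step_decomp} holds for it. Hence this matrix solves \eqref{eq:basic recursion}. It remains to show that every real solution $\mN$ of $(\mI-\mQ)\mN=\mQ(\mI-\matdg{\mN})$ coincides with it. I argue column by column, fixing a target $j$ and writing $\vn:=\mN(\cdot,j)$.

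\emph{Step 1 (diagonal).} I repeat the argument of Lemma~\ref{lem:diagonals non invertible}, which needs no finiteness or path interpretation, only that $\mN$ is a real solution. Multiply \eqref{eq:basic recursion} on the left by the positive left Perron vector $\vy^\top$, use $\vy^\top\mQ=\vy^\top$, and conclude $\vy^\top(\mI-\matdg{\mN})=\mathbf 0$. Since $\vy>0$, this gives $\mN(j,j)=1$ for all $j$. The right-hand side of \eqref{eq:basic recursion} is then zero, so $(\mI-\mQ)\mN=\mathbf 0$.

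\emph{Step 2 (columns in the Perron eigenspace).} Each column satisfies $\mQ\vn=\vn$. The eigenvalue $1=\rho(\mQ)$ is algebraically, hence geometrically, simple by Perron--Frobenius for irreducible $\mQ$. Therefore $\vn=c_j\vx$ for some scalar $c_j\in\mathbb R$. The normalization $\vn_j=\mN(j,j)=1$ forces $c_j=1/\vx_j$, which is possible because $\vx_j>0$. This yields $\mN(i,j)=\vx_i/\vx_j$, matching \eqref{sol_perron_critical}.

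The main subtlety is Step 1. One has to check that the proof of Lemma~\ref{lem:diagonals non invertible} genuinely applies to an arbitrary real solution, rather than only to the path sums under Assumption~\ref{ass:finite}. It does, because the argument is purely algebraic. One must also make sure that the unknown diagonal entries of $\matdg{\mN}$ do not obstruct the argument; this is exactly why the diagonal is pinned down first, before the eigenvector argument. The rest is the standard simplicity of the Perron root.
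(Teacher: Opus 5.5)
Your proof is correct and follows the paper's argument: you pin the diagonal to one via the positive left Perron vector, reduce to $(\mI-\mQ)\mM=\mathbf{0}$, and use simplicity of the Perron root to force the $j$-th column to be $\vx/\vx_j$. The only differences are cosmetic. You get existence from Theorem~\ref{thm:sol_singular} plus the first-step decomposition, whereas the paper directly checks that $\vx_i/\vx_j$ has unit diagonal and $\mQ$-fixed columns. You also state explicitly that the argument of Lemma~\ref{lem:diagonals non invertible} is purely algebraic and therefore applies to any real solution, which the paper uses implicitly.
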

\begin{proof}
    Let $\mM \in \mathbb{R}^{n \times n}$ satisfy \eqref{eq:basic recursion}. Lemma~\ref{lem:diagonals non invertible} gives $\matdg{\mM} = \mI$, so \eqref{eq:basic recursion} reduces to $(\mI-\mQ)\mM=0$. Perron-Frobenius gives $\ker(\mI-\mQ) = {\rm span}\{\vx\}$, which, when combined, proves that every column of $\mM$ is contained in $\ker(\mI-\mQ)$ and the $j$-th column can thus be written as $c_j \vx$ for some $c_j \in \mathbb{R}$. Since the diagonal entries are one, $1 = \mM(j,j) = c_j \vx_j$ and $c_j = 1/\vx_j$. Conversely, the matrix with entries $\vx_i/\vx_j$ has unit diagonal and is fixed column-wise by $\mQ$ and hence satisfies \eqref{eq:basic recursion}. This proves uniqueness.
\end{proof}
For Markov chains specifically, we obtain the following result by noting that for a stochastic matrix $\mP$, the right Perron eigenvector is $\one$.

\smallskip

\begin{corollary}\label{cor:MC_mfpt_product}
    Suppose $\mQ = \mP$ is stochastic, then ${\mM = \one\one^\top}$.
\end{corollary}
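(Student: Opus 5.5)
The plan is to obtain Corollary~\ref{cor:MC_mfpt_product} as a direct specialization of Theorem~\ref{thm:sol_singular}. Three things must be checked: that $\mQ=\mP$ meets the standing hypotheses, that $\rho(\mP)=1$, and that the right Perron eigenvector of $\mP$ can be taken to be $\one$.

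First, since $\mP$ is stochastic and supported on the edges of the strongly connected graph $\mathcal G$, it is nonnegative and irreducible, as the standing setup requires. Row-stochasticity gives $\mP\one=\one$, so $1$ is an eigenvalue with the strictly positive eigenvector $\vx=\one$. For the spectral radius, the induced $\infty$-norm gives $\rho(\mP)\le\|\mP\|_\infty=1$, so $\rho(\mP)=1$ and we are in the Perron-critical case. By Perron--Frobenius the positive eigenvector for eigenvalue $1$ is unique up to scaling, so $\one$ is (a scaling of) the right Perron vector $\vx$. Any positive scaling $c\one$ would also do, since only ratios appear in the formula.

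Applying Theorem~\ref{thm:sol_singular} with $\vx=\one$ then gives $\mM(i,j)=\vx_i/\vx_j=1$ for all $i,j$, that is, $\mM=\one\one^\top$. No separate finiteness argument is needed, because Theorem~\ref{thm:sol_singular} itself establishes finiteness. Corollary~\ref{cor:uniqueness} additionally shows that this is the only real solution of \eqref{eq:basic recursion}.

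A probabilistic check runs in parallel. When $\vx=\one$, the tilted chain $\widetilde\mP$ from the proof of Theorem~\ref{thm:sol_singular} equals $\mP$ itself, so $\mM(i,j)=\mathbb P(\tau_{ij}<\infty)=1$ by recurrence of a finite irreducible chain. There is no real obstacle here. The only point that needs care is confirming $\rho(\mP)=1$ and that $\one$ is a legitimate choice of $\vx$, and both are settled above.
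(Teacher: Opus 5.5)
Your proposal is correct and follows the paper's own argument: specialize Theorem~\ref{thm:sol_singular} to $\mQ=\mP$, noting that $\rho(\mP)=1$ and that the right Perron vector is $\one$, so $\mM(i,j)=\vx_i/\vx_j=1$. Your extra checks, namely the $\infty$-norm bound on $\rho(\mP)$ and the observation that the tilted chain coincides with $\mP$, make explicit what the paper leaves implicit.
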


The collapse $\mM = \one\one^\top$ for any irreducible and aperiodic Markov chain $\mP$ is not a degeneracy of the formula. The proof of Theorem~\ref{thm:sol_singular} shows that, for a stochastic matrix, $\mM$ returns the total probability of reaching $ j  $ from $i$, which is always $1$ for irreducible chains.

\section{The Resolvent as a Unifying Analytical and Computational Tool}\label{sec:novel}
We introduce the following assumption and its immediate corollary, which we justify using a power series argument.

\smallskip

\begin{assumption}\label{ass:A1}
$\mQ$ is primitive with eigenvalues ${ |\lambda_1| >  |\lambda_2| \geq \ldots \geq  |\lambda_k|}$, spectral radius $ \lambda := \lambda_1 =\rho(\mQ)$, right and left Perron eigenvectors $\vx > 0$ and $\vy > 0$ normalized so $\vy^\top \vx = 1$, and rank-one Perron projector $\mathbf{H} := \vx \vy^\top$.
\end{assumption}
\smallskip
\begin{corollary}\label{ass:A2}
Let $\mQ$ satisfy Assumption~\ref{ass:A1}. Then there exist constants $c>0$ and $r \in (|\lambda_2|, \lambda_1)$ such that
${\|{\mQ}^n - \lambda_1^{\,n}\mathbf{H}\|} \leq c r^n$ for all $n \geq 0$.%
\end{corollary}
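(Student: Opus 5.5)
The plan is to split $\mQ^n$ along the Perron projector and show that the remainder is a matrix power with spectral radius $|\lambda_2|$. Gelfand's formula then gives the bound. Under Assumption~\ref{ass:A1}, $\lambda_1$ is algebraically simple by Perron--Frobenius. Its spectral projector is $\mathbf H=\vx\vy^\top$, which is idempotent because $\vy^\top\vx=1$. It commutes with $\mQ$, since $\mQ\mathbf H=\lambda_1\mathbf H=\mathbf H\mQ$, where we use $\mQ\vx=\lambda_1\vx$ and $\vy^\top\mQ=\lambda_1\vy^\top$.

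Set $\mathbf B:=\mQ-\lambda_1\mathbf H$. First I will check the orthogonality relations $\mathbf B\mathbf H=\mathbf H\mathbf B=\mathbf 0$. Expanding by the binomial theorem, every mixed term vanishes, so
\[
\mQ^n=(\lambda_1\mathbf H+\mathbf B)^n=\lambda_1^{\,n}\mathbf H+\mathbf B^n \quad (n\ge1).
\]
The case $n=0$ contributes only the fixed matrix $\mI-\mathbf H$, which a large enough constant absorbs.

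Next I will identify the spectrum of $\mathbf B$. Let $\mathbb R^n=\mathrm{span}\{\vx\}\oplus\ker\vy^\top$ be the $\mQ$-invariant splitting determined by $\mathbf H$. On this splitting $\mathbf B$ acts as $0$ on $\vx$ and as $\mQ$ on $\ker\vy^\top$. Hence $\sigma(\mathbf B)=\{0\}\cup(\sigma(\mQ)\setminus\{\lambda_1\})$, where simplicity of $\lambda_1$ ensures that $\lambda_1$ does not reappear on the complement. Primitivity gives $|\lambda_2|<\lambda_1$, so $\rho(\mathbf B)=|\lambda_2|$ (or $0$ if $n=1$).

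To finish, fix any $r\in(|\lambda_2|,\lambda_1)$; the interval is nonempty by strict dominance. Gelfand's formula $\|\mathbf B^n\|^{1/n}\to\rho(\mathbf B)<r$ shows that $\|\mathbf B^n\|\le r^n$ for all $n\ge N$. The finitely many terms with $n<N$ are absorbed into $c:=\max\{1,\max_{n<N}\|\mathbf B^n\|/r^n,\ \|\mI-\mathbf H\|\}$. Alternatively, a Jordan form argument gives $\|\mathbf B^n\|\le C n^{m}|\lambda_2|^n\le c\,r^n$, since the polynomial factor is dominated by $(r/|\lambda_2|)^n$.

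The only step requiring care is the spectral identification of $\mathbf B$, in particular that $\lambda_1$ does not remain an eigenvalue of $\mathbf B$; this is exactly where algebraic simplicity and primitivity (rather than mere irreducibility) enter. Everything else is routine.
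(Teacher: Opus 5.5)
Your proposal is correct and follows the paper's proof. Like the paper, you set $\mathbf B=\mQ-\lambda_1\mathbf H$, use $\mQ^n-\lambda_1^{\,n}\mathbf H=\mathbf B^n$ for $n\ge1$, apply Gelfand's formula with $\rho(\mathbf B)=|\lambda_2|$, and enlarge $c$ to cover the $n=0$ term $\mI-\mathbf H$; you additionally verify $\mathbf B\mathbf H=\mathbf H\mathbf B=\mathbf 0$ and the spectral identification, which the paper states without proof. One minor attribution slip: excluding $\lambda_1$ from $\sigma(\mathbf B)$ needs only algebraic simplicity, which already holds under irreducibility (and the splitting is best taken over $\mathbb{C}^n$); primitivity is what makes the interval $(|\lambda_2|,\lambda_1)$ nonempty.
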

\begin{proof}
The matrix $A={\mQ}-\lambda_1 \mathbf{H}$ has spectral radius $|\lambda_2|$. For every $r\in(|\lambda_2|,\lambda_1)$, Gelfand's formula gives $\|A^n\|\leq cr^n$ for a sufficiently large constant $c$. Since ${\mQ}^n-\lambda_1^n \mathbf{H}=A^n$ for $n\geq1$, enlarging $c$ once more covers $n=0$, where the term is $I-\mathbf{H}$.
\end{proof}

If the second eigenvalue is $0$, we set $1/\lambda_2 = \infty$.
We denote by $ {\cal D} (r ) $ the disc 
$ \{ z \in \mathbb{C} : |z| < r \}$ in the complex plane.
Consider the following three objects:
\begin{itemize}
\item
The matrix {\em direct power series}
\[
 \mK (x)  = \sum_{n=0}^\infty \mQ^n x^n 
\]
with radius of convergence $1/\lambda_1$, so that $ \mK (x)$ is holomorphic on the open disc $ {\cal D} ( 1/\lambda_1 ) $.

\item
The {\em resolvent of $ \mQ $}, given by
\[
\mR ( x ) = ( \mI - x \mQ)^{-1} ,
\]
which is holomorphic on $ \mathbb{C} $ except at the reciprocal eigenvalues $ \{ 1/\lambda_i : 1 \leq i \leq k \} $.

\item 
The {\em deviation  power series}, given by
\[
 \mJ ( x )  = \sum_{n=0}^\infty ( \mQ^n - \lambda_1^{\,n}\mH ) x^n + \frac{1}{1-\lambda_1 x}  \mH  ,
 \] 
which is holomorphic on $ {\cal D} ( 1/ |\lambda_2| ) \setminus \{ 1/\lambda_1 \} $ provided that Assumption~\ref{ass:A1} holds, as Corollary~\ref{ass:A2} shows.
\end{itemize}

By simple algebra it follows on $ {\cal D}( 1/\lambda_1 ) $:
\begin{align}\label{eq:transform}
&( \mI - x \mQ)^{-1}   = \sum_{n=0}^\infty \mQ^n x^n   \nonumber\\
&\qquad = \sum_{n=0}^\infty ( \mQ^n - \lambda_1^{\,n}\mH ) x^n 
+ \sum_{n=0}^\infty \lambda_1^{\,n}\mH  x^n 
\nonumber \\
&\qquad = \sum_{n=0}^\infty ( \mQ^n - \lambda_1^{\,n}\mH ) x^n 
+ \frac{1}{ 1 - \lambda_1 x }  \mH  ,
\end{align}
for $ x \in {\cal D}( 1/\lambda_1 ) $.

The resolvent and deviation power series are holomorphic in $ {\cal D} ( 1 / |\lambda_2| )  \setminus \{ 1 /\lambda_1 \} $ and identical on $ {\cal D} ( 1 / \lambda_1)$, so by the identity theorem for holomorphic functions, we have 
\[
\mR ( x )  = \mJ ( x ),\]
for $x \in  {\cal D} ( 1/|\lambda_2| ) \setminus \{  1 / \lambda_1 \}.$
From (\ref{eq:transform}) it now follows that 
\[
\lim_{ x \rightarrow 1/ \lambda_1 } ( 1 - \lambda_1 x ) \mR ( x )  = \mH ,
\]
so that $ \mR $ has a simple pole at $ x = 1/\lambda_1 $ with residue $ -\mH/\lambda_1 $. The singularity at $ x=1/\lambda_1 $ becomes removable in the first-hit ratio $ \mM(i,j) = \mR(i,j)/\mR(j,j) $ for $i\neq j$, where the polar parts cancel. We summarize our analysis in Table~\ref{tab:fgh}.

\begin{table}[!t]
    \centering
    \caption{Relations between the resolvent, Neumann series, and deviation series}
    \begin{tabular}{c|c}
    relation & domain \\
    \hline 
   $ \mR ( x )  =  \mK (x)  = \mJ ( x ) $      &  $ x \in {\cal D}( 1 / \lambda_1)  $ \\
   $ \mR ( x )  = \mJ ( x ) $      &  $ x \in {\cal D}( 1/ |\lambda_2| ) \setminus \{1 / \lambda_1\}  $ \\
    $ \mR ( x )  $   &
   $ | x | \geq 1 / |\lambda_2|$ and $ x \not \in \{ 1/\lambda_i \}$
    \end{tabular}
    \label{tab:fgh}
\end{table}

\subsection{Mean First-Passage Times: The Resolvent Approach}\label{sec:res}
We have seen that in the Markov case, $\mM= \one\one^\top$ carries no useful information. However, the classical first-passage time information lies in the behavior of $\mM(\gamma)$ near its singularity at $\gamma=1$, which this subsection develops.

Throughout this subsection $\mP=\mQ$ is the transition matrix of a finite, irreducible Markov chain $(X_n)$ on ${\cal V}$ that, for ease of exposition, is assumed to be aperiodic. Let $\vpi$ be its unique stationary distribution, i.e., $\vpi \mP = \vpi$, let $\mPi = \one \vpi^\top$ denote the \textit{ergodic projector}, and let $\mZ = (\mI-\mP+\mPi)^{-1}$ be the canonical fundamental matrix, which exists for every irreducible chain \citep{kemeny1976finite}. The \textit{deviation matrix} $\mD := \mZ- \mPi$ is the group inverse of $\mI-\mP$ \citep{meyer1975role}; it satisfies
\[(\mI-\mP)\mD=\mD(\mI-\mP)=\mI-\mPi
\]and
\[\mPi\mD=\mD\mPi={\bf0}.
\]
If $\mP$ is aperiodic, $\mD=\sum_{n\geq0}(\mP^n-\mPi)$. For a target $j$ and every initial state $i$ let
\[\tau_{ij}:=\inf\{n\geq1:X_n=j \mid X_0 = i\},\] and define
\[
\mL(i,j):=\mathbb E[\tau_{ij}] ,
\]
so that $\mL(j,j)=1/\vpi(j)$ is the mean return time.

Since $\mP^n-\mPi$ is bounded in $n$, the deviation series
\begin{equation}\label{power series D gamma}
\mD(\gamma):=\sum_{n\geq0}\gamma^n\big(\mP^n-\mPi\big)
\end{equation}
converges for $|\gamma|<1$. Using $\mP^n\mPi=\mPi$ for all $n\geq0$, the resolvent $\mR(\gamma)=(\mI-\gamma\mP)^{-1}=\sum_{n\geq0}\gamma^n\mP^n$ splits as
\begin{equation}\label{eq:laurent}
\mR(\gamma)=\mD(\gamma)+\frac{1}{1-\gamma}\,\mPi,\qquad |\gamma|<1 .
\end{equation}
The deviation series has a closed form that extends it to a neighborhood of $\gamma=1$. Since $\mD(\gamma)$ is a power series in $\mP$, it commutes with $\mP$ and $\mD$, and $\mD(\gamma)\mPi=\mPi\mD(\gamma)={\bf0}$. Premultiplying \eqref{eq:laurent} by $\mI-\gamma\mP$ and using $(\mI-\gamma\mP)\mPi=(1-\gamma)\mPi$ gives \[(\mI-\gamma\mP)\mD(\gamma)=\mI-\mPi.\] Multiplying this on the right by $\mD$, with the expansion $\mI-\gamma\mP=(1-\gamma)\mI+\gamma(\mI-\mP)$, gives
\[
\mD=\mD(\gamma)\big[(1-\gamma)\mD+\gamma(\mI-\mPi)\big]=\mD(\gamma)\big(\gamma\mI+(1-\gamma)\mD\big).
\]
Hence, for $\gamma \uparrow 1$,
\begin{align}\label{eq:dev-closed}
\mD(\gamma)&=\mD\big(\gamma\mI+(1-\gamma)\mD\big)^{-1}\nonumber\\
&=\mD+(1-\gamma)\big(\mD-\mD^2\big)+{\cal O}\big((1-\gamma)^2\big),
\end{align}
so $\mD(\gamma)$ is analytic at $\gamma=1$ with $\mD(1)=\mD$. Note that, crucially, $\gamma$ should approach one from below to ensure \eqref{power series D gamma} converges. 

From \eqref{eq:laurent} and \eqref{eq:dev-closed},
\begin{align}\label{eq:overall}
\lim_{\gamma\uparrow1}(1-\gamma)\mR(\gamma)=\lim_{\gamma\uparrow1}\Big((1-\gamma)\mD(\gamma)+\mPi\Big)&=\mPi,\\
\qquad
\lim_{\gamma\uparrow1}\Big(\mR(\gamma)-\frac{\mPi}{1-\gamma}\Big)&=\mD .
\end{align}
A first observation is that, although $(1-\gamma)\mR(\gamma)$ admits a continuous continuation at $\gamma=1$, the resolvent itself does not exist there, so Theorem~\ref{thm: sol invertible} cannot be applied at $\gamma=1$. The first-hit ratios nevertheless have a limit. For $0\leq\gamma<1$ and all $i,j\in\mathcal V$,
\begin{align}\label{eq:sm}
\lim_{\gamma\uparrow1}\mM(i,j;\gamma)&=\lim_{\gamma\uparrow1}\frac{(1-\gamma)\mR(i,j;\gamma)}{(1-\gamma)\mR(j,j;\gamma)}\nonumber\\
&=\frac{\mPi(i,j)}{\mPi(j,j)}=1\quad(i\neq j),\nonumber\\
\lim_{\gamma\uparrow1}\mM(j,j;\gamma)&=\lim_{\gamma\uparrow1}\Big(1-\frac{1}{\mR(j,j;\gamma)}\Big)=1,
\end{align}
in agreement with Corollary~\ref{cor:MC_mfpt_product}: at $\gamma=1$, $\mM=\one\one^\top$ records only that every state is hit with probability one.

The classical mean first-passage time matrix $ \mL $, 
where 
$ 
\mL (i,j) = (\mZ(j,j) - \mZ(i,j))/\vpi(j)
$ 
 with $i\neq j$, can be obtained as a derivative of $  \mM ( \gamma )$ at $ \gamma =1 $.
We provide two proofs for the theorem, which can be found in Section~\ref{app: proof theorem 3} of the Appendix. The first proof is direct and algebraic, but fails to provide insight. 
The other is more elaborate and is based on probabilistic reasoning that connects the statement of the theorem to the probability-generating function of the first-passage times.

\smallskip

\begin{theorem}\label{thm:sdq}
    Let $\mP$ be the transition matrix of a finite irreducible Markov chain. For all $i,j \in {\cal V}$, 
\begin{align}\label{eq:mfpt}
\left.\frac{d}{d\gamma }\mM(i,j;\gamma )\right|_{\gamma=1^-}&=\mL(i,j)\nonumber\\
&=
\begin{cases}
\dfrac{\mZ(j,j)-\mZ(i,j)}{\vpi(j)}, & i\neq j,\\[8pt]
\dfrac{1}{\vpi(j)}, & i=j.
\end{cases}
\end{align}
\end{theorem}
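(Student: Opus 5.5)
The plan is to work with the resolvent splitting \eqref{eq:laurent} and write $\mM(\gamma)$ through Theorem~\ref{thm: sol invertible} for $\gamma<1$. For $\gamma\in[0,1)$, $\rho(\gamma\mP)<1$, so $\mM(\gamma)$ is finite and $\mM(i,j;\gamma)=\mR(i,j;\gamma)/\mR(j,j;\gamma)$ for $i\neq j$, while $\mM(j,j;\gamma)=1-1/\mR(j,j;\gamma)$. Set $\varepsilon:=1-\gamma$ and multiply numerator and denominator by $\varepsilon$. Using \eqref{eq:laurent}, $\varepsilon\mR(\gamma)=\mPi+\varepsilon\mD(\gamma)$, and by \eqref{eq:dev-closed} $\mD(\gamma)=\mD+{\cal O}(\varepsilon)$. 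Aperiodicity is only used to write $\mD$ as a series; the closed form $\mD(\gamma\mI+(1-\gamma)\mD)^{-1}$ holds for any irreducible chain, so this expansion is available in general.

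For $i\neq j$ we then have
\[
\mM(i,j;\gamma)=\frac{\vpi(j)+\varepsilon\mD(i,j)+{\cal O}(\varepsilon^2)}{\vpi(j)+\varepsilon\mD(j,j)+{\cal O}(\varepsilon^2)}
=1-\varepsilon\,\frac{\mD(j,j)-\mD(i,j)}{\vpi(j)}+{\cal O}(\varepsilon^2).
\]
Since $d\varepsilon/d\gamma=-1$, the left derivative at $\gamma=1$ is $(\mD(j,j)-\mD(i,j))/\vpi(j)$. Because $\mZ=\mD+\mPi$ and $\mPi(i,j)=\mPi(j,j)=\vpi(j)$, this equals $(\mZ(j,j)-\mZ(i,j))/\vpi(j)$.

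For the diagonal, $1/\mR(j,j;\gamma)=\varepsilon/(\vpi(j)+{\cal O}(\varepsilon))=\varepsilon/\vpi(j)+{\cal O}(\varepsilon^2)$. Hence $\mM(j,j;\gamma)=1-\varepsilon/\vpi(j)+{\cal O}(\varepsilon^2)$, and the derivative is $1/\vpi(j)$.

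It remains to identify these numbers with $\mL(i,j)=\mathbb E[\tau_{ij}]$. There are two ways to do this.

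\emph{Algebraic route.} Check that the matrix given by the right-hand side of \eqref{eq:mfpt} satisfies the standard first-step equations $\mL=\mP(\mL-\matdg{\mL})+\one\one^\top$, using $(\mI-\mP)\mD=\mI-\mPi$, and invoke uniqueness of their solution. Uniqueness holds because the first-step system restricted to a column $j$ is governed by the substochastic matrix $\mP$ with column $j$ deleted, and that matrix has spectral radius less than one by irreducibility.

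\emph{Probabilistic route, which I prefer.} Observe that each first-hit path $\sigma$ from $i$ to $j$ of length $l$ contributes $\gamma^l r(\sigma)$ to $\mM(i,j;\gamma)$, and $\sum_{\sigma:\,l(\sigma)=l} r(\sigma)=\mathbb P(\tau_{ij}=l)$. Hence $\mM(i,j;\gamma)=\mathbb E[\gamma^{\tau_{ij}}]$ is the probability-generating function of $\tau_{ij}$. By Abel's theorem for power series with nonnegative coefficients, the left derivative at $1$ equals $\sum_l l\,\mathbb P(\tau_{ij}=l)=\mathbb E[\tau_{ij}]$. This derivative is finite because hitting times of a finite irreducible chain have geometric tails.

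The main obstacle is rigor at the boundary point. We need to justify that the one-sided derivative of $\mM(\cdot)$ at $\gamma=1^-$ equals the limit of the expansion coefficient, that is, that the ${\cal O}(\varepsilon^2)$ remainders are uniform. This follows from analyticity of $\mD(\gamma)$ at $\gamma=1$ by \eqref{eq:dev-closed}, together with $\vpi(j)>0$ keeping the denominators away from zero. In the probabilistic route the same issue appears as Abel's theorem, and the geometric tail of $\tau_{ij}$ makes the generating function analytic slightly beyond $1$, which removes the difficulty entirely.
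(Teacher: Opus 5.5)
Your proposal is correct and follows the paper's proof essentially verbatim: the paper's algebraic proof is exactly your expansion of $\mR(i,j;\gamma)/\mR(j,j;\gamma)$ and $1-1/\mR(j,j;\gamma)$ in $t=1-\gamma$ via \eqref{eq:laurent}, and its second, probabilistic proof is your identification of $\mM(i,j;\gamma)$ with the generating function $\mathbb E[\gamma^{\tau_{ij}}]$. The only differences are minor additions on your side. The paper cites Kemeny--Snell and Kac to identify the resulting formulas with $\mathbb E[\tau_{ij}]$, whereas you verify the first-step equations and their uniqueness directly. You also supply the Abel and geometric-tail justification for the one-sided derivative, which makes your version slightly more self-contained.
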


\begin{remark}
Since $\mM=\one\one^\top$ for every finite irreducible chain (by Corollary~\ref{cor:MC_mfpt_product}), the doubly weighted form $\sum_{i,j}\vpi(i)\mM(i,j)\vpi(j)=1$ carries no information about $\mP$. The information sits in the $\gamma$-derivative at $\gamma=1$: by Theorem~\ref{thm:sdq},
\[
\sum_j\vpi(j)\,\left.\frac{d}{d\gamma }\mM(i,j;\gamma )\right|_{\gamma=1^-}=\sum_j\vpi(j)\,\mL(i,j) = K
\]
where $K$ is the Kemeny constant, which does not depend on $i$ \citep{kemeny1976finite}.
\end{remark}
Theorem~\ref{thm:sdq} provides a route to computing the classical mean first-passage time $\mL$ without computing the fundamental matrix $\mZ$ or solving any system involving the singular matrix $\mI - \mP$. One simply evaluates $\mM (\gamma ) $ at $\gamma = 1 - \epsilon$, for $\epsilon>0$ and small, requiring only a sparse inversion of $\mI - \gamma \mQ$, and the following approximation can be obtained:
\begin{equation}\label{eq:fd-expansion}
\frac{1-\mM(i,j;1-\epsilon)}{\epsilon}=\mathbb E\tau_{ij}-\frac{\epsilon}{2}\,\mathbb E\big[\tau_{ij}(\tau_{ij}-1)\big]+{\cal O}(\epsilon^2),
\end{equation}
as $\epsilon \downarrow 0$, such that,
\[
    \hat \mL(i,j ;\epsilon) := \frac{1-\mM(i,j;1-\epsilon)}{\epsilon} \rightarrow \mL(i,j).
\]
Note in \eqref{eq:fd-expansion} the appearance of the second factorial moment of the first-hit times. Similarly, \eqref{eq:laurent} yields the following asymptotic identities for the deviation and fundamental matrices.

\begin{lemma}\label{prop: D and Z num}
Let $\mP$ be the transition matrix of a finite irreducible Markov chain, and define $\hat\mD(\epsilon):=\mR(1-\epsilon)-\mPi/\epsilon$ and $\hat\mZ(\epsilon):=\hat\mD(\epsilon)+\mPi$. As $\epsilon\downarrow0$,
\[
\hat\mD(\epsilon)=\mD+\epsilon(\mD-\mD^2)+{\cal O}(\epsilon^2)
\]
and
\[
\hat\mZ(\epsilon)=\mZ+\epsilon(\mD-\mD^2)+{\cal O}(\epsilon^2).
\]
\end{lemma}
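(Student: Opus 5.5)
The plan is to reduce both claims to the closed form \eqref{eq:dev-closed} for the deviation series, taking care that nothing in that derivation needs aperiodicity, since the lemma only assumes irreducibility. Set $\gamma = 1-\epsilon$ with $\epsilon\in(0,1)$. Then $|\gamma|<1$, so the Neumann series for $\mR(\gamma)$ converges. Because $\mP^n-\mPi$ is bounded in $n$ (entries lie in $[-1,1]$) for any stochastic $\mP$, the series $\mD(\gamma)$ in \eqref{power series D gamma} also converges, periodic or not. The splitting \eqref{eq:laurent} then gives directly
\[
\hat\mD(\epsilon)=\mR(1-\epsilon)-\frac{\mPi}{\epsilon}=\mD(1-\epsilon),
\]
so it suffices to expand $\mD(1-\epsilon)$ to first order.

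Next I would redo the algebra behind \eqref{eq:dev-closed}, using only the group-inverse identities $(\mI-\mP)\mD=\mD(\mI-\mP)=\mI-\mPi$ and $\mPi\mD=\mD\mPi=\mathbf 0$, which hold for every irreducible chain. From $\mP\one=\one$ and $\vpi^\top\mP=\vpi^\top$ we get $\mP^n\mPi=\mPi\mP^n=\mPi$. Hence $\mD(\gamma)\mPi=\mPi\mD(\gamma)=\mathbf 0$, and $\mD(\gamma)$ commutes with $\mP$. Premultiplying \eqref{eq:laurent} by $\mI-\gamma\mP$ gives $(\mI-\gamma\mP)\mD(\gamma)=\mI-\mPi$. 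Multiplying on the right by $\mD$, and using commutativity together with $\mD(\gamma)(\mI-\mPi)=\mD(\gamma)$, yields
\[
\mD=\mD(\gamma)\big((1-\gamma)\mD+\gamma(\mI-\mPi)\big)=\mD(\gamma)\big(\gamma\mI+(1-\gamma)\mD\big).
\]
At $\gamma=1-\epsilon$ the bracket equals $\mI+\epsilon(\mD-\mI)$. It is invertible once $\epsilon\|\mD-\mI\|<1$, with Neumann expansion $\mI-\epsilon(\mD-\mI)+{\cal O}(\epsilon^2)$.

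Multiplying out,
\[
\hat\mD(\epsilon)=\mD\big(\mI-\epsilon(\mD-\mI)+{\cal O}(\epsilon^2)\big)=\mD+\epsilon(\mD-\mD^2)+{\cal O}(\epsilon^2).
\]
Adding $\mPi$ and recalling $\mZ=\mD+\mPi$ gives the statement for $\hat\mZ(\epsilon)$, with the same first-order term.

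The only real obstacle is justifying the ingredients without aperiodicity. The text before the lemma invokes $\mD=\sum_n(\mP^n-\mPi)$, which fails for periodic chains. I will avoid it entirely by using $\mD$ only through its group-inverse characterization, and $\mD(\gamma)$ only through its convergent power series for $|\gamma|<1$. The remaining step, the remainder estimate, is routine: the ${\cal O}(\epsilon^2)$ term is bounded by $\|\mD\|\,\epsilon^2\|\mD-\mI\|^2/(1-\epsilon\|\mD-\mI\|)$.
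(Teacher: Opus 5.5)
Your proposal is correct and follows essentially the paper's route: set $\gamma=1-\epsilon$ in \eqref{eq:laurent} to get $\hat\mD(\epsilon)=\mD(1-\epsilon)$, expand the closed form \eqref{eq:dev-closed}, and add $\mPi$ using $\mZ=\mD+\mPi$. Your extra points are useful refinements of the same argument: you check that only the group-inverse identities are needed, so aperiodicity is not required, and you give an explicit bound on the ${\cal O}(\epsilon^2)$ remainder.
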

\begin{proof}
Set $\gamma=1-\epsilon$ in \eqref{eq:laurent} and use $\mZ=\mD+\mPi$.
\end{proof}

A single inversion of $\mI - (1-\epsilon) \mP$ simultaneously yields an approximation for $\mM$ via Theorem~\ref{thm: sol invertible}, an approximation for $\mL$ via Theorem~\ref{thm:sdq} and approximations for both $\mD$ and $\mZ$ via Lemma~\ref{prop: D and Z num}. 
Note that Lemma~\ref{prop: D and Z num} relies on computing the Perron projector $\mPi$. Since, in our setup, chains are assumed to be irreducible, $\vpi$ can be solved as a linear system via the \textit{Grassmann-Taksar-Heyman algorithm} \citep{Grassmann_Taksar_Heyman_1985}, with complexity ${\cal O}(n^3)$, and $\mPi$ can simply be constructed as the matrix with $\vpi$ on each of its rows. Alternatively, one may consider more efficient iterative methods such as \textit{power iteration}, which reduces the cost to ${\cal O}(k \cdot m)$, where $k$ is the iteration count. However, it should be noted that the convergence of iterative methods could be slow if the spectral gap is small.

\section{Extension to Stochastic Weights}\label{sec:Stochastic}
We treated the weight matrix $\mQ$ as a fixed deterministic object. However, in many applications of practical interest, the edge weights themselves can be random variables: link decays in communication networks fluctuate and per-period discount factors in regime-switching models are drawn from distributions. This section extends the framework to accommodate such randomness and shows that the resolvent approach developed in Section~\ref{sec:novel} extends cleanly to this setting via a \textit{moment-stacking} property that has no analog in \cite{le2026randomwalkstraversalcosts}.

We denote the matrix of stochastic edge weights by ${\cal W} = ({\cal W}(i,j))_{(i,j) \in {\cal E}}$ and define the matrix $\mW(i,j) := \ev[{\cal W}(i,j)]$ through the expected value of the stochastic entries. To handle random weights, we augment the graph model of Section~\ref{sec:Model} with the following assumptions.
\smallskip
\begin{assumption}\label{ass:mean and sec moment}
For every $(i,j)\in\mathcal E$, the weight ${\cal W}(i,j)$ is almost surely positive and has finite mean $\mW(i,j)=\ev[{\cal W}(i,j)]$ and finite second moment $\mW^{(2)}(i,j):=\ev[({\cal W}(i,j))^2]$; ${\cal W}(i,j)=0$ for $(i,j)\notin\mathcal E$. Statements involving the $r$-th moment for $r\geq3$ additionally assume $\ev[({\cal W}(i,j))^r]<\infty$.
\end{assumption}

Assumption~\ref{ass:mean and sec moment} specifies only the edgewise marginals. The random first-hit weight is not determined until the joint law of the draws is fixed, and different couplings lead to different variances. We use the following \emph{path-indexed} coupling.

\smallskip

\begin{assumption}\label{ass:edge_share}
For every $i,j\in\mathcal V$ and every path $\sigma\in{\cal P}(i,j)$ of length $l$, let
$
Y_\sigma:=\prod_{k=1}^{l}{\cal W}(i_k, j_k)
$
denote the stochastic weight of $\sigma$. The realizations $\{{\cal W}(i_1, j_1), \dots, {\cal W}(i_l, j_l)\}$ are independent for each fixed path.
\end{assumption}

Note that Assumption~\ref{ass:edge_share} allows dependence between path weights to arise through shared realizations of stochastic edge weights, whenever common edges in distinct paths are traversed in the same time step. Hence, paths that share edge traversals may be dependent, while independently realized path weights are included as a special case.

Assumptions~\ref{ass:mean and sec moment} and~\ref{ass:edge_share} allow the expectation of the $r$-th power of a single path's weight to factorize as a product of $r$-th moment weights along that path,
\begin{equation}\label{eq:expectation_of_product}
    \ev\left[\left(\prod_k {\cal W}(i_k, j_k)\right)^r\,\right] = \prod_k \ev[  ( {\cal W}(i_k, j_k) ) ^r],
\end{equation}
where $r \geq 1$.

Assumptions~\ref{ass:mean and sec moment} and \ref{ass:edge_share} together generalize the setting of Section~\ref{sec:Model} to one in which 
$\mQ = \mW$ is the matrix of \textit{mean} weights, while higher moments of $\cal W$ enter the analysis for the variance and beyond. Let $\mQ^{(r)} := \mW^{(r)}$ be the matrix formed by taking the $r$-th moment of each weight entry element-wise,
with $\mW = \mW^{(1)}$. Note that when $\mW$ is deterministic, $\mW^{(r)}(i,j) = (\mW(i,j))^r$ is simply the $r$-th (element-wise) power of the weight. 

Define the random total first-hit weight,
\begin{equation}\label{eq:random_hit_time}
    \mT(i,j) = \sum_{\sigma \in {\cal P}(i,j)} Y_\sigma,
\end{equation}
where $Y_\sigma$ is as in Assumption~\ref{ass:edge_share}, the (random) weight realized along a path $\sigma$. Since all terms are nonnegative, expectation and summation may be interchanged and it is now easily seen that
\[
    \mM(i,j) := \ev[\mT(i,j)] = \ev\left[ \sum_{\sigma \in {\cal P}(i,j)} Y_\sigma\right] = \sum_{\sigma \in {\cal P}(i,j)} \ev[ Y_\sigma],
\]
which, combined with \eqref{eq:expectation_of_product} and setting $\mQ = \mW$, recovers \eqref{eq:accumulated_first_hit_path}.

By squaring~\eqref{eq:random_hit_time} and taking expectations, we obtain an expression for the second moment,
\begin{equation}\label{eq:second_moment_hit_times}
    \mM^{(2)}(i,j) := \ev[( \mT(i,j))^2] = \! \! \!\sum_{\sigma \in {\cal P}(i,j)} \! \! \ev[Y_\sigma^2] + \! \sum_{\sigma \neq \sigma'} \ev[Y_\sigma Y_{\sigma '}].
\end{equation}
Note that this expression contains a second moment in itself; namely, the path's own second moment $\ev[Y_\sigma^2]$. We define $\mN^{(r)}$ as the sum of $r$-th moments of path-weights, that is,
\[
    \mN^{(r)}(i,j) := \sum_{\sigma \in {\cal P}(i,j)} \ev[Y_\sigma^r] =\sum_{\sigma \in {\cal P}(i,j)} \prod_{k=1}^{l(\sigma)} \mQ^{(r)}(i_k, j_k).
\]
Each of $\mM=\mN^{(1)}$, $\mN^{(2)}$ and the matrix $\mS$ below is the first-hit matrix of  $\mQ$, $\mQ^{(2)}$ and $\widetilde\mQ$, respectively. The first-step decomposition formula and Theorems~\ref{thm: sol invertible} and~\ref{thm:sol_singular} therefore apply to each of them separately, and in particular the first-step decomposition gives, for finite $\mN^{(r)}$,
\begin{equation}\label{eq:higher order recursion}
    (\mI- \mQ^{(r)})\mN^{(r)} = \mQ^{(r)} (\mI-\matdg{\mN^{(r)}}).
\end{equation}
We call this the \emph{moment-stacking property}. Theorem~\ref{thm:sol_singular} applies to $\mQ^{(r)}$ only if $\rho(\mQ^{(r)})=1$. Moreover, finiteness of $\mN^{(r)}$ is assumed to hold.

We now turn to the second term, where we first show below that Assumption~\ref{ass:edge_share} implies that the path weights are non-negatively correlated. For the proof of the lemma, we refer to Appendix~\ref{app:pos_cor_lem}.

\smallskip

\begin{lemma}\label{lem:pos_corr}
    Let $\sigma, \sigma' \in {\cal P}(i,j)$ be two distinct paths. Under Assumption~\ref{ass:edge_share}, $\ev[Y_\sigma Y_{\sigma'}] \geq \ev[Y_\sigma] \ev[Y_{\sigma'}]$ with equality when weight realizations are independent.
\end{lemma}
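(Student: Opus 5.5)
Under Assumption~\ref{ass:edge_share}, each path weight is a product of independent edge draws. Two distinct paths $\sigma,\sigma'$ can share a draw only if they use the same edge at the same time step. So I will split the time-indexed edge traversals of $\sigma$ and $\sigma'$ into three groups: traversals common to both paths (same edge, same step, hence the same realization), traversals only in $\sigma$, and traversals only in $\sigma'$. I index each by its step $k$. Within a single path the $l$ draws are independent, so the three groups are mutually independent families. This lets me write $Y_\sigma = C\,A$ and $Y_{\sigma'} = C\,B$, where $C=\prod_{\text{shared }k}{\cal W}(i_k,j_k)$ and $A,B,C$ are independent nonnegative random variables.

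With this factorization,
\[
\ev[Y_\sigma Y_{\sigma'}] = \ev[C^2]\,\ev[A]\,\ev[B], \qquad \ev[Y_\sigma]\,\ev[Y_{\sigma'}] = (\ev[C])^2\,\ev[A]\,\ev[B].
\]
Jensen's inequality, or equivalently $\mathrm{Var}(C)\ge 0$, gives $\ev[C^2]\ge(\ev[C])^2$. Both expressions are finite by Assumption~\ref{ass:mean and sec moment} together with the product factorization \eqref{eq:expectation_of_product}. Since $\ev[A]\,\ev[B]\ge 0$, the inequality follows.

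If no traversals are shared, $C\equiv 1$ and equality holds. This covers the case where realizations are independent across paths, and it also covers any pair of paths with no common time-indexed edge.

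The main obstacle is justifying the independence of $A$, $B$, $C$ jointly, not merely within each path separately. I would make the coupling explicit by attaching one independent draw to each (edge, time step) pair. Each path then reads off its draws, and $A,B,C$ are functions of disjoint sets of these independent draws. A first-hit path from $i$ to $j$ uses each time step only once, so no draw is repeated within a path, which is consistent with Assumption~\ref{ass:edge_share}.
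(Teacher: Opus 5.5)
Your proof is correct and follows the paper's argument. Like the paper, you split the time-indexed edge traversals into those shared by $\sigma$ and $\sigma'$ (same realization) and the rest, factor $\ev[Y_\sigma Y_{\sigma'}]$ by independence, and compare $\ev[C^2]$ with $(\ev[C])^2$ on the shared part; the paper does this factor by factor as $\ev[w(e,k)^2]\geq\ev^2[w(e,k)]$. Your explicit coupling (one independent draw per edge--time-step pair) makes precise the joint independence that the paper simply asserts, and writing the unshared factors as $\ev[A]\,\ev[B]$ is, if anything, cleaner bookkeeping than the paper's display.
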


Note that we can write
\begin{align*}
    \sum_{\sigma \neq \sigma '} \ev[Y_\sigma]\ev[Y_{\sigma '}]
    &= \left(\sum_{\sigma \in {\cal P}(i,j)} \ev[Y_\sigma]\right)^2 - \sum_{\sigma \in {\cal P}(i,j)} \ev^2[Y_\sigma] \nonumber\\
    &= ( \mM(i,j))^2 - \sum_{\sigma \in {\cal P}(i,j)} \ev^2[Y_\sigma],\nonumber
\end{align*}
and observe that
\[
    \mS(i,j) := \sum_{\sigma \in {\cal P}(i,j)} \ev^2 [Y_\sigma] = \sum_\sigma \prod_k ( \mQ(i_k, j_k) )^2,
\]
which has exactly the same form as~\eqref{eq:accumulated_first_hit_path}, but with $\mQ$ replaced by $\widetilde \mQ := \mQ \circ \mQ$.
We write $\mR^{(2)} := (\mI- \mQ^{(2)})^{-1}$ and $\mR_{\widetilde \mQ} := (\mI - \widetilde \mQ)^{-1}$ whenever these inverses exist.

Therefore, by Lemma~\ref{lem:pos_corr}, \eqref{eq:second_moment_hit_times} rewrites as
\[
    \mM^{(2)}(i,j) \geq \mN^{(2)}(i,j) + (\mM(i,j))^2 - \mS(i,j),
\]
for all $i,j \in {\cal V}$.

Using the solutions for $\mN$ and $\mS$, we get the following result on the variance of the first-hit weights, which we define as
\[
\mV(i,j):=\var[\mT(i,j)].
\]

\smallskip

\begin{theorem}\label{thm:variance}
Let Assumptions~\ref{ass:mean and sec moment} and~\ref{ass:edge_share} hold and suppose that $\mM$ and $\mN^{(2)}$ are finite. Then,
\[
\mV(i,j) \geq\mN^{(2)}(i,j)-\mS(i,j),
\]
where the matrix $\mN^{(r)}$ is given by
\[
\mN^{(r)} = \left((\mI - \mQ^{(r)})^{-1} -\mI\right) \left(\matdg{(\mI - \mQ^{(r)})^{-1}}\right)^{-1}
\]
for $r=1,2$, and by the same formulas with $\widetilde\mQ$ for $\mS$. If, in addition, $\mI-\mQ$, $\mI-\mQ^{(2)}$ and $\mI-\widetilde\mQ$ are invertible, then
\begin{align*}
\mV(i,j)&\geq\frac{\mR^{(2)}(i,j)}{\mR^{(2)}(j,j)}-\frac{\mR_{\widetilde\mQ}(i,j)}{\mR_{\widetilde\mQ}(j,j)}\ (i\neq j),\\
\mV(j,j)&\geq\frac{1}{\mR_{\widetilde\mQ}(j,j)}-\frac{1}{\mR^{(2)}(j,j)}.%
\end{align*}%
\end{theorem}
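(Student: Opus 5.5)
The plan is to obtain the variance bound from the second-moment inequality derived just before the statement, and then to evaluate each term with Theorem~\ref{thm: sol invertible}.

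\emph{Step 1: the variance bound.} Since $\mV(i,j)=\mM^{(2)}(i,j)-(\mM(i,j))^2$ and $\mM(i,j)$ is finite, the inequality
\[
\mM^{(2)}(i,j)\geq \mN^{(2)}(i,j)+(\mM(i,j))^2-\mS(i,j)
\]
gives $\mV(i,j)\geq \mN^{(2)}(i,j)-\mS(i,j)$ once $(\mM(i,j))^2$ is subtracted.

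Some care is needed for this to be legitimate. The cross terms $\ev[Y_\sigma Y_{\sigma'}]$ are nonnegative, so the expansion \eqref{eq:second_moment_hit_times} holds in $[0,\infty]$ by Tonelli. Lemma~\ref{lem:pos_corr} bounds each cross term below by $\ev[Y_\sigma]\ev[Y_{\sigma'}]$.

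We also need $\mS$ to be finite. Jensen's inequality gives $\ev^2[Y_\sigma]\leq \ev[Y_\sigma^2]$, so $\mS(i,j)\leq \mN^{(2)}(i,j)<\infty$. This makes the rearrangement $\sum_{\sigma\neq\sigma'}\ev[Y_\sigma]\ev[Y_{\sigma'}]=(\mM(i,j))^2-\mS(i,j)$ valid.

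If $\mM^{(2)}(i,j)=\infty$, the variance is infinite and the bound holds trivially. The bound on $\mV(j,j)$ therefore follows in all cases.

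\emph{Step 2: closed forms for $\mN^{(r)}$ and $\mS$.} By the moment-stacking property, $\mN^{(1)}=\mM$, $\mN^{(2)}$ and $\mS$ are first-hit matrices of the nonnegative matrices $\mQ$, $\mQ^{(2)}$ and $\widetilde\mQ$. Each is finite: $\mN^{(1)}$ and $\mN^{(2)}$ by hypothesis, and $\mS$ by Step 1.

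Whenever the corresponding $\mI-\cdot$ is invertible, Theorem~\ref{thm: sol invertible} applies directly. Its proof does not use irreducibility and only needs the first-step equation. It yields $\mN^{(r)}=(\mR^{(r)}-\mI)(\matdg{\mR^{(r)}})^{-1}$, and the analogous formula for $\mS$ with $\mR_{\widetilde\mQ}$. In particular, the diagonal entries of each resolvent are nonzero.

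\emph{Step 3: entrywise form.} Substituting the entrywise expressions from Theorem~\ref{thm: sol invertible} gives, for $i\neq j$,
\[
\mN^{(2)}(i,j)-\mS(i,j)=\frac{\mR^{(2)}(i,j)}{\mR^{(2)}(j,j)}-\frac{\mR_{\widetilde\mQ}(i,j)}{\mR_{\widetilde\mQ}(j,j)}.
\]
On the diagonal, $\bigl(1-1/\mR^{(2)}(j,j)\bigr)-\bigl(1-1/\mR_{\widetilde\mQ}(j,j)\bigr)=1/\mR_{\widetilde\mQ}(j,j)-1/\mR^{(2)}(j,j)$. These are exactly the displayed bounds.

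\emph{Main obstacle.} The delicate point is not the algebra but the finiteness bookkeeping. The bound must not be stated as ``$\infty-\infty$'', and Theorem~\ref{thm: sol invertible} may only be invoked for a matrix whose first-hit sum is finite; the example after Theorem~\ref{thm: sol invertible} shows the formulas fail otherwise.

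The Jensen comparison $\mS\leq\mN^{(2)}$ is what makes $\mS$ admissible. Equivalently, it follows from $\mQ(i,j)^2\leq\mQ^{(2)}(i,j)$ edgewise.

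A secondary issue is that $\mN^{(2)}$ is stated as a first-hit matrix only in the coupling framework. We must check that the Step 1 bound uses only the path-factorization \eqref{eq:expectation_of_product}, which Assumption~\ref{ass:edge_share} supplies.
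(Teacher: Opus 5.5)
Your proposal is correct and follows essentially the paper's own argument. It uses Tonelli together with Lemma~\ref{lem:pos_corr} for the second-moment inequality, Jensen to obtain $\mS\leq\mN^{(2)}<\infty$, and Theorem~\ref{thm: sol invertible} applied separately to $\mQ^{(2)}$ and $\widetilde\mQ$ for the resolvent forms. Your explicit treatment of the case $\mM^{(2)}(i,j)=\infty$ slightly tightens the paper's remark that all terms are finite. Note also that $\mV(j,j)$ at the end of your Step~1 should read $\mV(i,j)$.
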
%
\begin{proof}
Since $\widetilde\mQ=\mQ\circ\mQ\leq\mQ^{(2)}$ entrywise by Jensen's inequality and first-hit sums are nondecreasing in the edge weights, $\mS\leq\mN^{(2)}<\infty$. Since $\mM$, $\mN^{(2)}$ and $\mS$ are finite, we can apply the results of Theorem~\ref{thm: sol invertible}. By Tonelli's theorem and Lemma~\ref{lem:pos_corr},
\begin{align*}
\ev[\mT(i,j)^2]&\geq\sum_{\sigma}\ev[Y_\sigma^2]+\sum_{\sigma\neq\sigma'}\ev[Y_\sigma]\,\ev[Y_{\sigma'}]\\
&=\mN^{(2)}(i,j)+\mM(i,j)^2-\mS(i,j),
\end{align*}
where all terms are finite, so $\var[\mT(i,j)]\geq\mN^{(2)}(i,j)-\mS(i,j)$. The resolvent expressions follow from Theorem~\ref{thm: sol invertible} applied to the finite columns of $\mN^{(2)}$ and $\mS$.
\end{proof}

\begin{remark}
    The dependence structure in Assumption~\ref{ass:edge_share} admits two natural interpretations. If the quantity propagated through the network is subject to a common stochastic environment, then paths sharing an edge traversal may also share the corresponding realization of its stochastic weight, resulting in positive dependence between their path weights. This is natural, for example, when a signal or flow branches after having experienced the same stochastic attenuation.

    Alternatively, if paths represent independently transmitted packets or repeated realizations of the network process, each path may experience an independent draw of the stochastic weights, even when the underlying paths overlap. In this case the path weights are independent, and  Lemma~\ref{lem:pos_corr} and Theorem~\ref{thm:variance} both hold with equality.
\end{remark}

\smallskip

\begin{remark}
    Lemma~\ref{lem:pos_corr} bounds the cross moments from below. A matching upper bound holds
without any assumption on the dependence between paths. For all
$\sigma,\sigma'\in {\cal P}(i,j)$, the Cauchy--Schwarz inequality gives $\ev[Y_\sigma Y_{\sigma'}]\;\le\;\|Y_\sigma\|_2\,\|Y_{\sigma'}\|_2$, where
\[  \|Y_\sigma\|_2=\prod_{k=1}^{l}\sqrt{\mQ^{(2)}(i_k,i_{k+1})}.
\]
Write $\hat\mQ:=\sqrt{\mQ^{(2)}}$ for the entrywise square root, so that
$\sum_{\sigma\in P(i,j)}\|Y_\sigma\|_2=\mM_{\hat\mQ}(i,j)$. Summing over all pairs of paths yields
$\ev\big[\mT(i,j)^2\big]\le\mM_{\hat\mQ}(i,j)^2$, which gives the upper bound
\[
  \var\big(\mT(i,j)\big)\;\le\;\mM_{\hat\mQ}(i,j)^2-\mM(i,j)^2.
\]
\end{remark}

\section{First-Hit and Extended Katz Connectivity}\label{new katz section}
The classical Katz connectivity of a network is expressed
through the sum over all discounted walks
\[\sum_{k=1}^{\infty} (\mathcal A^k)(i,j)\alpha^k,\]
and the
outgoing Katz centrality is defined as
\[\vc_{\rm Katz}(i;\alpha) = \sum_{j\in\mathcal V}\sum_{k=1}^{\infty} (\mathcal A^k)(i,j)\alpha^k,\]
where $i \in {\cal V}$ and $0\leq\alpha<1/\rho(\mathcal A)$. Here $\alpha$ is the attenuation factor, replacing $\gamma$. As $\alpha\downarrow0$, $\vc_{\rm Katz}(\alpha)\to 0$. As $\alpha\uparrow1/\lambda_1$, where $\lambda_1=\rho(\mathcal A)$, $(1-\alpha\lambda_1)\vc_{\rm Katz}(\alpha)\to\vx(\vy^\top\one)$, proportional to the right Perron eigenvector. Katz connectivity counts all walks, including first-hit paths from $i$ to $j$ concatenated with subsequent closed walks from $j$ to itself. Counting these revisits can overstate influence when the intended quantity is first-arrival exposure. Recently, \cite{Arrigo_Bertaccini_Filippo_2025} introduced ${\mathcal{F}}$-avoiding first-passage walks, which reach a target node for the first time while avoiding a prescribed set of nodes $\mathcal{F}$. In contrast, we study accumulated weights of paths, based on a resolvent representation.

\subsection{First-Hit Katz connectivity}
We now modify the Katz connectivity.

\smallskip

\begin{definition}
    Let $\mQ=\alpha{\cal A}$ with $\alpha\geq0$.
Then 
\[
\mM ( \alpha ; i , j ) = \sum_{ \sigma \in {\cal P}( i , j )} \alpha^{ l(\sigma ) }
\]
is the accumulated first-hit path weight from $i$ to $j$.
The {\em first-hit Katz connectivity} of $ {\cal A}$ is given by $ \mM ( \alpha ) $, and the {\em first-hit Katz centrality} of node $ i$ is given by
\begin{equation}\label{eq:FHKatz}
\vc_{\rm FHKatz} ( \alpha , i ) = \sum_j \mM ( \alpha ; i , j ) .
\end{equation}
\end{definition}

The range $0\leq\alpha<1/\rho(\mathcal A)$ is sufficient for convergence of this path sum, but is not its convergence boundary. Beyond the first-hit sums' own convergence domain, rational continuation does not represent path weights. First-hit Katz stops at first arrival (first return for $i=j$). Both scores use outgoing row sums. Section~\ref{sec:threshold katz} continues the discussion of the all-walk resolvent.

\subsection{Extension Beyond the Spectral Threshold}\label{sec:threshold katz}
Let $\mQ=\mathcal A$ be a primitive adjacency matrix, so ${\lambda_1>|\lambda_2|}$.
The classical all-walk Katz connectivity
counts $\alpha$-discounted walks of all lengths and is well-defined for $0\leq\alpha < 1/\lambda_1$, with $ \lambda_1 $ the Perron eigenvalue of $ \mathcal{A}$, where the Neumann series $\sum_k \alpha^k \mathcal{A}^k$ converges absolutely. 
Past this threshold, that series diverges. We study its algebraic continuation and distinguish its signed scores from nonnegative walk sums.

We extend classical Katz connectivity to $1/\lambda_1<\alpha<1/|\lambda_2|$ using the deviation series together with the Perron term $\mathbf{H}/(1-\alpha\lambda_1)$. By \eqref{eq:transform}, their sum equals the all-walk Neumann series for $0\leq\alpha<1/\lambda_1$ and provides its algebraic continuation into this window. The deviation series alone is not equal to the all-walk resolvent.
For $ {1 / \lambda_1 < \alpha < 1 / | \lambda_2|}$, the $(i,j)$ entry of $\sum_{n=0}^\infty (\mathcal{A}^n - \lambda_1^{\,n}\mathbf{H}) \alpha^n$ is the discounted cumulative deviation between the length-$n$ walk count from $i$ to $j$ and its Perron-asymptotic value $\lambda_1^{\,n}\mathbf{H}(i,j) = \lambda_1^{\,n}\vx(i) \vy(j)$.
For this, we recall that 
$ \mathcal{A}^n - \lambda_1^{\,n}\mathbf{H} = ( \mathcal{A} - \lambda_1 \mathbf{H} )^n $ for $n\geq1$, and $ \mathcal{A} - \lambda_1 \mathbf{H} $ has spectral radius $ |\lambda_2| $. The $n=0$ term of the deviation series is $\mI-\mathbf{H}$.

The change of interpretation across the threshold $\alpha = 1/\lambda_1$ is a central point. Below the threshold, the resolvent is a nonnegative all-walk sum. Within the continuation window, its deviation series still converges because $\alpha|\lambda_2|<1$, while the Perron term $\mathbf{H}/(1-\alpha\lambda_1)$ changes sign. The continued score consists of both the Perron term and the deviation term, and therefore retains the Perron pole. In particular, close to the threshold from above, the negative Perron contribution can dominate the ranking.
This motivates the following definition.

\smallskip

\begin{definition}\label{def:generalized katz}
For $\alpha\in[0,1/|\lambda_2|)\setminus\{1/\lambda_1\}$, the 
\textit{extended Katz connectivity} of $ {\cal A} $ is 
$ [(\mathbf{I} - \alpha \mathcal{A})^{-1}-\mI]$,
and the \textit{extended Katz connectivity} of node $i\in\mathcal V$ is
    \[
        {\bf c}_{\rm ExtKatz} ( \alpha , i) := \sum_{j \in \mathcal{V}} [(\mathbf{I} 
        - \alpha \mathcal{A})^{-1}-\mI](i, j) .
    \]
\end{definition}
The extended Katz connectivity $ {\bf c}_{\rm ExtKatz}$ admits the following interpretations across the spectral threshold.
For $\alpha \in (0, 1/ \lambda_1)$, ${\bf c}_{\rm ExtKatz} ( \alpha)$ coincides with the classical Katz connectivity, with the absolute walk-counting interpretation. It does not coincide with the first-hit Katz connectivity in \eqref{eq:FHKatz}, which is a first-hit object; the two already differ on the directed $2$-cycle.
As $\alpha\to1/\lambda_1$ from either side, $(1-\lambda_1\alpha)\vc_{\rm ExtKatz}(\alpha)\to\vx(\vy^\top\one)$ by \eqref{eq:transform}.
And, for $\alpha \in (1/ \lambda_1,  1 / |\lambda_2|)$, 
${\bf c}_{\rm ExtKatz} ( \alpha ) $
ranks nodes by the sum of the signed Perron contribution and the discounted deviations from the Perron asymptotic. Section~\ref{sec:katz} illustrates the resulting role rankings on barbell graphs.

\section{Applications}\label{sec:Application}
We now illustrate the framework in three distinct settings. Subsection~\ref{sec:risk} keeps $\mQ$ deterministic and uses $\mM$ to model credit loss in an inter-bank system. In Subsection~\ref{sec:katz}, we apply the extended Katz connectivity directly and explore roles as $\alpha$ crosses $1/\lambda_1$. Subsection~\ref{sec:stochastic-app} introduces stochastic weights and questions the reliability of centrality rankings. 

\subsection{First-Hit Katz for Risk Dispersion in a Financial Network}\label{sec:risk}
Motivated by interbank networks such as \cite{eisenberg2001systemic}, we use a stylized linear first-arrival propagation model. An edge $(i,j)$ means that bank~$j$ is a creditor of bank~$i$, and $\mQ(i,j)\in(0,1)$ is its unscaled stress-transmission factor. We set $\mQ(i,j)=0$ off the edges and assume $\rho(\mQ)<1$. Then $\mM(i,j)$ sums the gains of paths stopped at their first arrival at $j$, and $(\vmu^\top\mM)(j)$ is the resulting stress at $j$ from initial shock $\vmu$. Row and column sums measure systemic footprint and vulnerability, respectively. Both include positive-time self-returns.

\begin{figure}[!t]
    \centering
    \begin{tikzpicture}[>=stealth, thick,
        vertex/.style={circle, draw=black, fill=blue!15, thick,
                       minimum size=20pt, inner sep=0pt, font=\small\sffamily}]
        \node[vertex] (1) at (0,0)      {$1$};
        \node[vertex] (2) at (2.2,-2) {$2$};
        \node[vertex] (3) at (4.4,0)    {$3$};
        \draw[->] (1) to[bend left=5]  node[above right] {$0.6$} (2);
        \draw[->] (2) to[bend left=18]  node[below left]{$0.5$} (1);
        \draw[->] (1) to[bend left=12]  node[above]      {$0.3$} (3);
        \draw[->] (3) to[bend left=5]  node[below]      {$0.2$} (1);
        \draw[->] (2) to[bend left=15]  node[below right]{$0.4$} (3);
    \end{tikzpicture}
    \caption{A three-bank network with loss pass-through fractions $\mQ(i,j)$ on the directed edges.}
    \label{fig:finnet}
\end{figure}

In Figure~\ref{fig:finnet}, direct pass-through from bank~$1$ to bank~$3$ is $\mQ(1,3)=0.30$. First-hit exposure also includes $1\to2\to3$ and repeated $1\leftrightarrow2$ feedback with round-trip gain $0.30$:
\begin{align*}
 \mM(1,3) &= (0.30+0.60\cdot0.40)\sum_{k=0}^\infty 0.30^k\\
         &= 0.54/0.70 \approx 0.771.
\end{align*}
This matches $\mR(1,3)/\mR(3,3)$ in Theorem~\ref{thm: sol invertible}. Indirect routing and feedback raise the direct pass-through by a factor of $2.57$. The full matrix in Table~\ref{tab:finM} identifies banks~$1$ and~$2$ as the largest sources of stress and bank~$3$ as the most vulnerable.

Under uniform scaling $\mQ\mapsto\gamma\mQ$, the all-paths Neumann series diverges at $\gamma^*=1/\rho(\mQ)\approx1.52$. However, the first-hit exposure, $
\mM(1,3;\gamma) =\frac{0.30\gamma+0.24\gamma^2}{1-0.30\gamma^2}, $
passes smoothly through $\gamma^*$, where the common pole in $\mR(1,3;\gamma)$ and $\mR(3,3;\gamma)$ cancels and the ratio tends to $x_1/x_3\approx3.29$. The first-hit sum remains convergent until $1/\sqrt{0.30}\approx1.83$, so it retains its path-based interpretation on an interval beyond the all-paths threshold, see Figure~\ref{fig:finnet_threshold}. The first-hit exposure remains finite beyond the threshold where the all-walk sum diverges. At sufficiently large scaling, some links amplify stress rather than pass on only a fraction of it.

\begin{figure}[t]
    \centering
    \includegraphics[width=0.5\columnwidth]{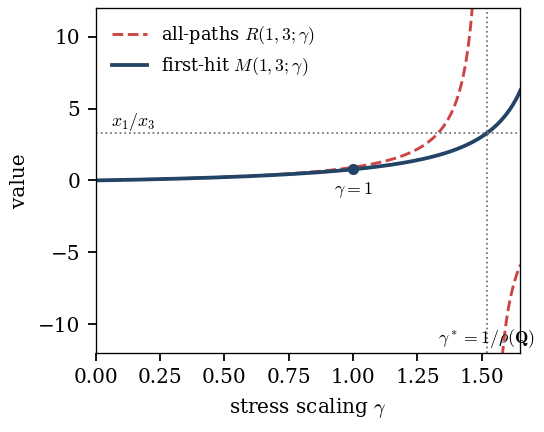}
    \caption{Under scaling $\gamma$, the resolvent entry $\mR(1,3;\gamma)$ has a pole at $\gamma^*\approx1.52$ and is only an algebraic continuation beyond it. The first-hit gain has finite limit $x_1/x_3\approx3.29$ there and its path sum converges until $\gamma\approx1.83$.}
    \label{fig:finnet_threshold}
\end{figure}

\subsection{Extended Katz on Barbell-Type Graphs}\label{sec:katz}
Barbell graphs reveal a tension between local density and global brokerage. Clique nodes are embedded in dense cores, while bridge nodes connect separate clusters. This makes them a natural test case for the extended Katz connectivity (and centrality) of Definition~\ref{def:generalized katz}. We use the parametrized family of \cite{breen2019computing}, which varies clique sizes, chain length, and number of chains independently.

\smallskip

\begin{definition}\label{def:barbell} For integers $k \geq 2$, $a \geq 3$ and $b,c \geq 1$, the \emph{barbell graph} $B(k,a,b,c)$ is the undirected graph on $ka + b + c$ vertices built as follows. Take $k$ disjoint copies of the path $P_a$, the \emph{chains}, and denote the two endpoints of chain $p$ by $\ell_p$ and $r_p$. Glue the chains at both ends by turning the left endpoints $\{\ell_1,\dots,\ell_k\}$ into a clique and the right endpoints $\{r_1,\dots,r_k\}$ into a clique. Finally, attach a complete graph $K_b$ by joining each of its vertices to every left endpoint $\ell_p$, and a complete graph $K_c$ by joining each of its vertices to every right endpoint $r_p$.
\end{definition}

The left and right ends of $B(k,a,b,c)$ are cliques $K_{k+b}$ and $K_{k+c}$, connected by $k$ internally disjoint paths. We distinguish three roles, shown in Figure~\ref{fig:barbell_example}: \emph{clique} nodes, \emph{attach} nodes, and \emph{bridge} nodes. Since the graph is connected and non-bipartite, ${\cal A}$ is primitive and $\lambda_1>|\lambda_2|$, so the continuation window $(1/\lambda_1,1/|\lambda_2|)$ is non-empty.

\begin{figure}[t!]
    \centering
    \begin{tikzpicture}[
        cnode/.style={circle, draw=black, fill=blue!18,   thick, minimum size=13pt, inner sep=0pt},
        anode/.style={circle, draw=black, fill=green!30,  thick, minimum size=13pt, inner sep=0pt},
        bnode/.style={circle, draw=black, fill=orange!35, thick, minimum size=13pt, inner sep=0pt},
        cedge/.style={draw=black!35, thin},
        pedge/.style={draw=black, thick}
    ]
    \coordinate (Lg1) at (-0.65, 0.80); \coordinate (Lg2) at (-0.65,-0.80);
    \coordinate (Lb1) at (-1.56, 1.38); \coordinate (Lb2) at (-2.70, 1.07);
    \coordinate (Lb3) at (-3.20, 0.00); \coordinate (Lb4) at (-2.70,-1.07);
    \coordinate (Lb5) at (-1.56,-1.38);
    \coordinate (b11) at ( 1.10, 0.80); \coordinate (b12) at ( 2.70, 0.80);
    \coordinate (b21) at ( 1.10,-0.80); \coordinate (b22) at ( 2.70,-0.80);
    \coordinate (Rg1) at ( 4.43, 0.71); \coordinate (Rg2) at ( 4.43,-0.71);
    \coordinate (Rc1) at ( 5.77, 1.14); \coordinate (Rc2) at ( 6.60, 0.00);
    \coordinate (Rc3) at ( 5.77,-1.14);
    \fill[blue!5] (-1.85,0) ellipse (1.75 and 1.95);
    \fill[blue!5] ( 5.50,0) ellipse (1.55 and 1.70);
    \foreach \u/\v in {Lg1/Lg2,Lg1/Lb1,Lg1/Lb2,Lg1/Lb3,Lg1/Lb4,Lg1/Lb5,%
        Lg2/Lb1,Lg2/Lb2,Lg2/Lb3,Lg2/Lb4,Lg2/Lb5,%
        Lb1/Lb2,Lb1/Lb3,Lb1/Lb4,Lb1/Lb5,%
        Lb2/Lb3,Lb2/Lb4,Lb2/Lb5,Lb3/Lb4,Lb3/Lb5,Lb4/Lb5}
        \draw[cedge] (\u)--(\v);
    \foreach \u/\v in {Rg1/Rg2,Rg1/Rc1,Rg1/Rc2,Rg1/Rc3,%
        Rg2/Rc1,Rg2/Rc2,Rg2/Rc3,Rc1/Rc2,Rc1/Rc3,Rc2/Rc3}
        \draw[cedge] (\u)--(\v);
    \draw[pedge] (Lg1)--(b11)--(b12)--(Rg1);
    \draw[pedge] (Lg2)--(b21)--(b22)--(Rg2);
    \node[anode] at (Lg1){}; \node[anode] at (Lg2){};
    \node[cnode] at (Lb1){}; \node[cnode] at (Lb2){}; \node[cnode] at (Lb3){};
    \node[cnode] at (Lb4){}; \node[cnode] at (Lb5){};
    \node[bnode] at (b11){}; \node[bnode] at (b12){};
    \node[bnode] at (b21){}; \node[bnode] at (b22){};
    \node[anode] at (Rg1){}; \node[anode] at (Rg2){};
    \node[cnode] at (Rc1){}; \node[cnode] at (Rc2){}; \node[cnode] at (Rc3){};
    \node at (-1.85,-2.5) {$K_{k+b}=K_7$};
    \node at ( 5.50,-2.5) {$K_{k+c}=K_5$};
    \node[cnode, minimum size=11pt] at (0.0,-3.2){};
        \node[anchor=west] at (0.25,-3.2){ clique};
    \node[anode, minimum size=11pt] at (2.0,-3.2){};
        \node[anchor=west] at (2.25,-3.2){ attach};
    \node[bnode, minimum size=11pt] at (4.0,-3.2){};
        \node[anchor=west] at (4.25,-3.2){ bridge};
    \end{tikzpicture}
    \caption{Barbell graph $B(2,4,5,3)$ with end cliques $K_7$ and $K_5$. Colors distinguish clique, attach, and bridge nodes.}
    \label{fig:barbell_example}
\end{figure}

We evaluate extended Katz connectivity in two regimes: below the threshold, $\alpha=0.9/\lambda_1$, where it coincides with classical Katz centrality, and inside the continuation window, at the midpoint of $(1/\lambda_1,1/|\lambda_2|)$. We enumerate all $300$ barbell graphs with $k\in\{2,\dots,5\}$, $a\in\{4,\dots,8\}$, and $3\leq b<c\leq8$. The restriction $b<c$ excludes mirror duplicates, since $B(k,a,b,c)$ and $B(k,a,c,b)$ are isomorphic. All retained graphs have distinct degree sequences, confirming that no other isomorphic pairs occur. For each graph, we rank the three structural roles by the arithmetic mean of their node centralities. Figure~\ref{fig:katz_rank_freq} reports the resulting rank frequencies.

\begin{figure}[t]
\centering
\includegraphics[width=0.6\columnwidth]{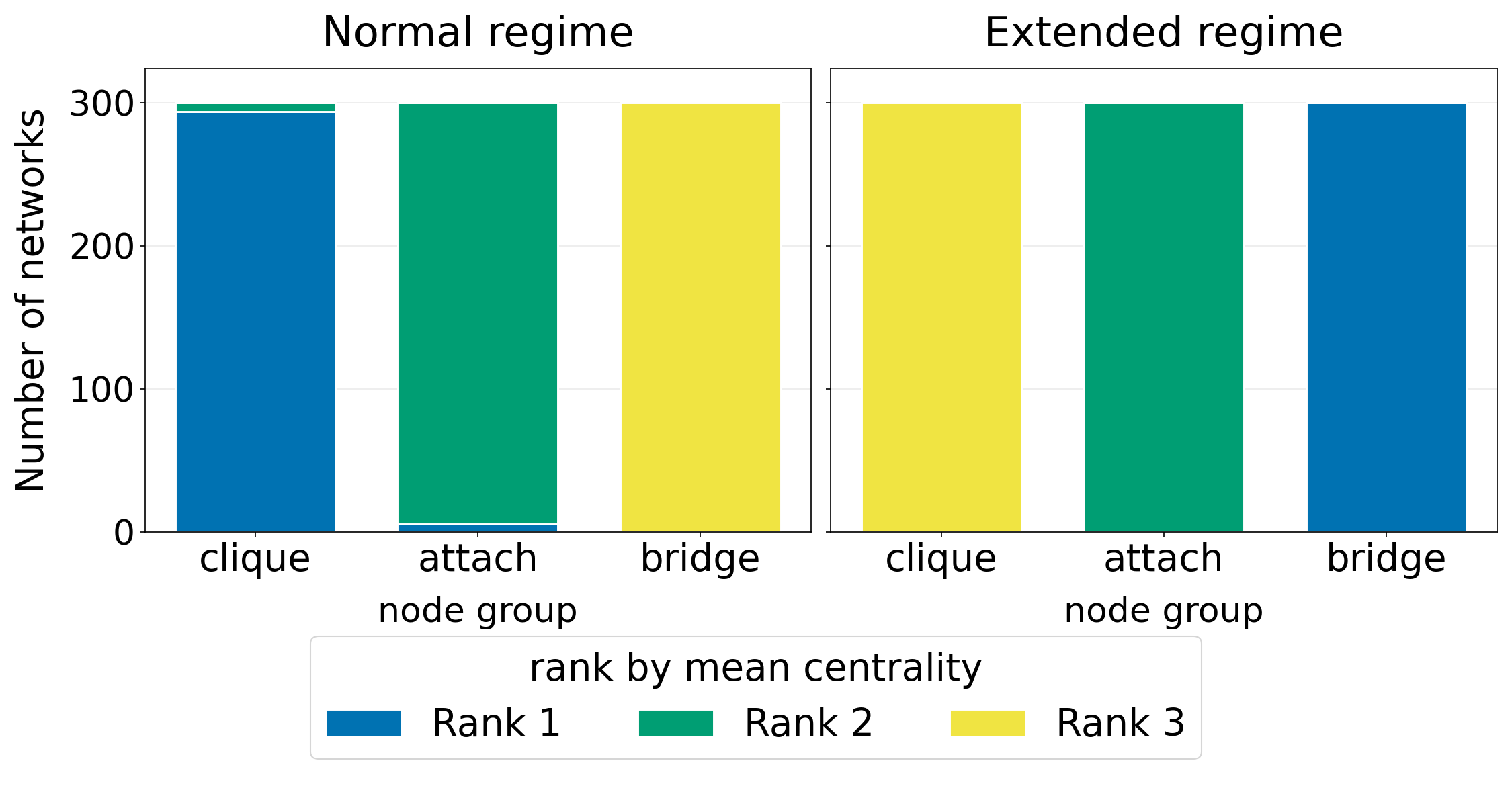}
\caption{Ranks by mean centrality within each role across $300$ non-isomorphic barbell graphs. Left: classical Katz centrality below the threshold, at $\alpha=0.9/\lambda_1$.
Right: extended Katz centrality above the threshold, at the midpoint of $(1/\lambda_1,1/|\lambda_2|)$.}
\label{fig:katz_rank_freq}
\end{figure}

Below the threshold, the ranking is core-dominated: clique nodes have the highest mean centrality in $294$ of the $300$ graphs, attach nodes lead in the remaining $6$, and bridge nodes are always last. This matches classical Katz centrality, which rewards many short walks in dense regions. At the midpoint of the continuation window, bridge nodes have the highest role mean in all $300$ graphs, attach nodes rank second, and clique nodes rank third. This completely reverses the below-threshold ordering in $294$ graphs. This inversion illustrates the analytic continuation in Section~\ref{sec:threshold katz}. The continued score includes both a negative Perron contribution and discounted deviations; the reversal cannot be attributed to bridges having the largest positive deviations. It concerns role averages at the evaluated midpoint, not individual-node rankings or the entire window.

In a financial-network interpretation, the cliques represent strongly connected lending markets, while bridge nodes represent intermediaries between markets. Below the threshold, Katz centrality identifies banks embedded in dense mutual-lending clusters~\citep{acemoglu2015systemic, eisenberg2001systemic}. Above the threshold, the scores can be negative, so the observed reversal of role means alone does not establish greater systemic importance of bridge banks. For related role-based approaches in financial networks, see \cite{franssen2026practicalguideinterpretablerolebased}.

\subsection{Reliability of Centrality Ranking under Stochastic Weights}\label{sec:stochastic-app}
We test whether the mean centrality ranking on $B(2,4,5,3)$ remains reliable when edge weights are random. A target mean $\mu(i,j)\sim{\rm Unif}(0,1)$ is drawn once for each edge and held fixed. Every edge traversal then draws ${\cal W}(i,j) \sim {\rm Gamma}(\kappa, \kappa/ \mu(i,j))$ (shape, rate), so that
\[
    \ev[{\cal W}(i,j)]=\mu(i,j), \quad \text{ and } \quad
    \var[\mathcal W(i,j)]=\frac{\mu(i,j)^2}{\kappa}.
\]
We compare a low variance regime, $\kappa=25$ (edge CV $0.20$), to a high variance one, $\kappa=0.5$ (edge CV $1.41$). Throughout, we assume each path carries its own independent draws, so distinct path weights are independent. Consequently, the variance can be calculated exactly (Theorem~\ref{thm:variance}).

\begin{figure}[t]
    \centering
        \includegraphics[width=0.6\linewidth]{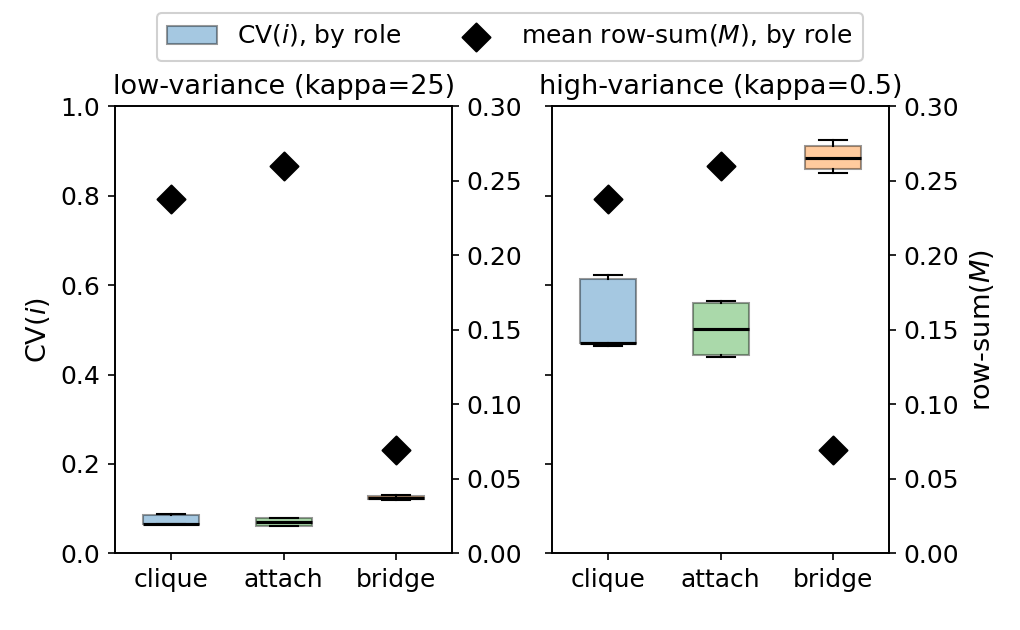}
    \caption{Coefficient of variation shown by role as boxplots (i.e., spread across nodes of the same role) on the left axis, with the node's mean centrality, $\sum_j \mM(i,j)$, shown as fixed diamonds on the right axis.}
    \label{fig:CV_by_role}
\end{figure}

A node-level measure of uncertainty is the coefficient of variation 
\[ {\rm CV}(i) := \sqrt{\sum_j \mV(i,j)} / \sum_j \mM(i,j),\]
where we compare roles (bridge, attach, clique) through their role-means.
Figure~\ref{fig:CV_by_role} shows that bridge nodes have the lowest mean centrality but the largest CV, especially under high variance. Considered alone, this \textit{suggests} that their position in the ranking is the least reliable. Ranking reliability, however, is a property of a \textit{pair}, i.e., it depends not only on a node's own variance but also on how that variance compares to that of other nodes.

Since roles are compared through their means, i.e., $\overline{c_G} := |G|^{-1} \sum_{i \in G} c_i$ for a role $G$ and a per-node centrality score of $c_i$, for disjoint roles $G, H$ the paired difference $d_{GH}:= \overline{c_G} - \overline{c_H}$ satisfies,
\[
    \overline{d_{GH}} = |G|^{-1} \sum_{i \in G} \sum_j \mM(i,j) - |H|^{-1} \sum_{k \in H} \sum_j \mM(k,j)
\]
for its mean, and
\[
    \var(d_{GH}) = |G|^{-2} \sum_{i\in G}\sum_j \mV(i,j) + |H|^{-2} \sum_{k\in H}\sum_j \mV(k,j)
\]
for its variance. We summarize a pair by the gap-to-noise ratio
\[GTN_{GH} := |\overline{d_{GH}}|/ \sqrt{\var(d_{GH})}\]
across 5,000 Monte Carlo replications, truncated at path length 5 for numerical purposes. Compared with the exact $\mM$ and $\mV$, the simulated means deviate by at most 0.3\% ($\kappa=25$) and 2.6\% ($\kappa=0.5$), and the variances by at most 4.6\% and 7.0\%.

\begin{figure}[t]
    \centering
    \includegraphics[width=0.6\linewidth]{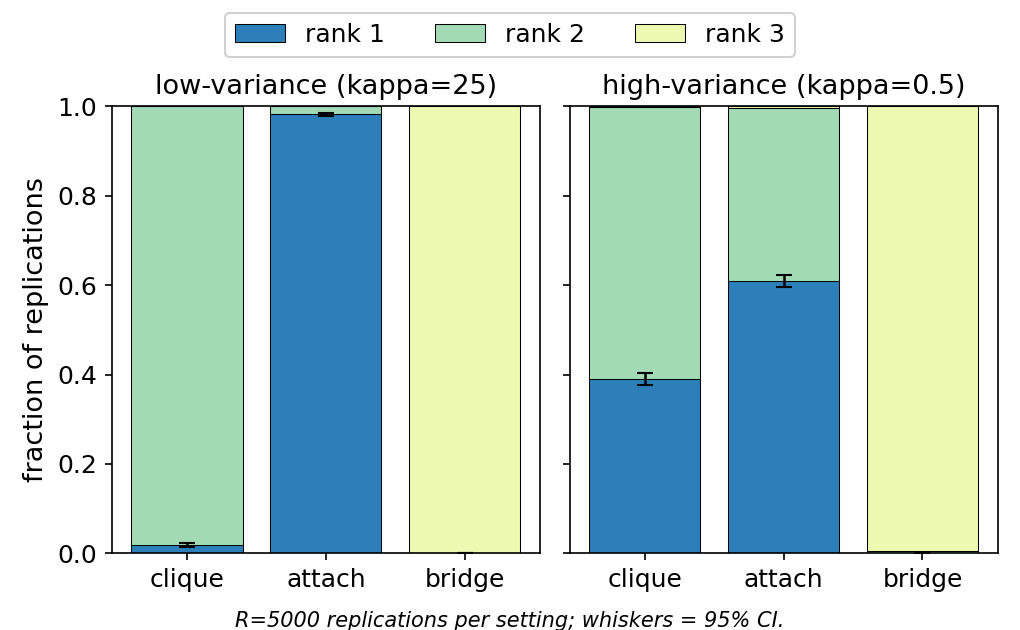}
    \caption{Empirical rank-frequency distribution of first-hit
    exposure by structural role, from $R=5000$ realized weight-matrix
    draws per setting (whiskers: 95\% CI on the rank-1 fraction). Under
    low variance, the ranking is essentially fixed.
    Under high variance, bridge nodes remain
    reliably last, while clique and attach swap rank 1/2 in close to half of all replications.}
    \label{fig:rank_reliability}
\end{figure}

Figure~\ref{fig:rank_reliability} shows what ${\rm CV}(i)$, read node-by-node, does not reveal. In the low-variance regime, attach outranks clique in 98.2\% $\pm$ 0.2\% of replications and bridges are ranked last in all of them. The GTN equals 2.07 for clique vs. attach, 22.6 for clique vs. bridge, and 19.1 for attach vs. bridge. In the high-variance regime, the clique-attach GTN drops to 0.29, and clique ranks above attach in 39.1\% $\pm$ 0.7\% of replications. Against the bridges, the GTN remains well above one (3.17 and 2.69), and bridges are ranked last in 99.5\% of replications. The nodes with the largest individual CV thus hold the most \textit{reliable} rank, because their gap to the other roles dominates the combined noise. Increased variability does not overturn the mean ranking: attach is ranked first most often in both settings, while variability mainly blurs the clique-attach order.

In summary, $\mM$ yields the mean ranking, and $\mV$ provides each node's exposure uncertainty in closed form. Ranking reliability is a pairwise quantity that a node's own variance cannot determine. Under path-indexed coupling, it follows from $\mM$ and $\mV$ of both nodes, and under couplings with shared weight realizations, covariances between first-hit weights could enter as well.

\section{Further Research}\label{sec:Conclusion}
This paper presents a unified theoretical framework for analyzing network connectivity based on multiplicative path weights, in which the weight of a directed path is defined as the product of its edge weights. 
Further research will be on resilient network clustering, where we will elaborate on the moment expressions for network connectivity in the case of stochastic edge weights.

\section*{Acknowledgments}
A. Zocca and R. van der Burg are supported by the NWO VI.Vidi.233.247 grant \textit{Power Network Optimization in the Age of Climate Extremes}~\url{https://doi.org/10.61686/GOOEL09973}.

\bibliographystyle{unsrtnat}        
\bibliography{references}           




\newpage
\begin{appendices}
\section{Proof of Theorem~\ref{thm: sol invertible}}\label{app:proof_thm_1}
Multiplying \eqref{eq:basic recursion} by $ ( \mI - \mQ  )^{-1}$ yields
\[
    \mM = ( \mI - \mQ  )^{-1} 
    \mQ ( \mI - \matdg{\mM}) .
\]
Observing that
\[
        (\mI-\mQ)^{-1}\mQ = (\mI-\mQ)^{-1} - \mI
\]
follows from the fact that $(\mI-\mQ)^{-1}(\mI-\mQ)=\mI$, we  now arrive at 
\begin{align}\label{eq:op2}
    \mM & = (( \mI - \mQ  )^{-1} - I )
     ( \mI - \matdg{\mM})  \nonumber \\
     & = ( \mR - \mI )
     ( \mI - \matdg{\mM}).
\end{align}
The diagonals thus satisfy
\begin{equation}\label{eq:diagonals}
\mM(i,i) = (\mR(i,i)-1)(1-\mM(i,i)),
\end{equation}
which rewrites as
\[
\mR(i,i)\mM(i,i) = \mR(i,i)-1,
\]
from which it follows that $\mR(i,i) \neq 0$ for all $i \in {\cal V}$, otherwise $0=-1$. From \eqref{eq:diagonals}, the diagonal elements of $\mM$ can be obtained explicitly as
\begin{equation}\label{diagonals invertible regime}
    \mM ( i, i ) =  1 - \frac{1}{\mR(i,i)} .
\end{equation}
Therefore,
\[
    ( \mI - \matdg{\mM})( i , i ) = \frac{1}{\mR(i,i)}.
\]
Combining this with \eqref{eq:op2} gives the following expression for the off-diagonal terms:
\[
    \mM (i,j) = \frac{\mR(i,j)}{\mR(j,j)},
\]
which, combined with~\eqref{diagonals invertible regime}, concludes the proof.

\section{Generalized Inverse Approach for Perron-Critical Case}\label{sec:gen_inverse_approach_appendix}
    Since standard matrix inversion fails for $\mI - \mQ$ in this regime, we rely on a \textit{generalized inverse} of $\mI-\mQ$ to solve the system of equations. A generalized inverse of a matrix $\mA$ is a matrix $\mA^\dagger$ such that $\mA\mA^\dagger\mA = \mA$. Building on the works of Hunter \citep{hunter1, hunter2, markovrenewal1969hunter}, we can guarantee the existence of a generalized inverse of $\mI - \mQ$ using the following lemma.

\smallskip
\begin{lemma}\label{lem: Generalized Inverse}
Let $\mA\in\R^{n\times n}$ with $\operatorname{rank}(\mA)=n-1$, and let $\vx,\vy\neq{\bf0}$ satisfy $\mA\vx={\bf0}$ and $\vy^\top\mA={\bf0}^\top$. For $\vu,\vt\in\R^n$, the matrix $\mA+\vu\vt^\top$ is invertible if and only if
\begin{equation}
\vy^\top\vu\neq0\qquad\text{and}\qquad\vt^\top\vx\neq0 .\nonumber
\end{equation}
In that case $\mG:=(\mA+\vu\vt^\top)^{-1}$ is a generalized inverse of $\mA$, that is, $\mA\mG\mA=\mA$.
\end{lemma}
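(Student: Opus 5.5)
The plan is to prove both directions of the invertibility criterion by working directly with the one-dimensional kernels, and then to verify the generalized-inverse identity by a short algebraic manipulation.

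Since $\operatorname{rank}(\mA)=n-1$, we have $\ker\mA=\operatorname{span}\{\vx\}$ and $\ker\mA^\top=\operatorname{span}\{\vy\}$. Consequently, the range of $\mA$ is exactly $\{\vw:\vy^\top\vw=0\}$.

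\emph{Necessity.} Suppose first that $\vt^\top\vx=0$. Then $(\mA+\vu\vt^\top)\vx=\mA\vx+\vu(\vt^\top\vx)=\mathbf{0}$ with $\vx\neq\mathbf{0}$, so the matrix is singular. Symmetrically, if $\vy^\top\vu=0$, then $\vy^\top(\mA+\vu\vt^\top)=(\vy^\top\vu)\vt^\top=\mathbf{0}^\top$, which again gives singularity.

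\emph{Sufficiency.} Assume both $\vy^\top\vu\neq0$ and $\vt^\top\vx\neq0$, and suppose $(\mA+\vu\vt^\top)\vz=\mathbf{0}$. Premultiplying by $\vy^\top$ kills the $\mA$ term and gives $(\vy^\top\vu)(\vt^\top\vz)=0$, hence $\vt^\top\vz=0$. Substituting back yields $\mA\vz=\mathbf{0}$, so $\vz=c\vx$ for some scalar $c$. Then $0=\vt^\top\vz=c\,\vt^\top\vx$ forces $c=0$, so $\vz=\mathbf{0}$ and the matrix is invertible.

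\emph{Generalized inverse.} Write $\mB:=\mA+\vu\vt^\top$ and $\mG=\mB^{-1}$. From $\mG\mB=\mI$ we get $\mG\mA=\mI-\mG\vu\vt^\top$. It then suffices to show that $\mA\mG\vu\,\vt^\top=\mathbf{0}$, because in that case $\mA\mG\mA=\mA-\mA\mG\vu\vt^\top=\mA$.

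To establish this, apply $\mB\mG=\mI$ to $\vx$. Since $\mB\vx=(\vt^\top\vx)\vu$, we obtain $\vx=(\vt^\top\vx)\mG\vu$. Therefore $\mG\vu=\vx/(\vt^\top\vx)$, and $\mA\mG\vu=\mA\vx/(\vt^\top\vx)=\mathbf{0}$. This completes the argument.

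The main obstacle is this last step, recognizing that $\mG\vu$ is proportional to $\vx$. Once that is observed, everything else is rank bookkeeping.
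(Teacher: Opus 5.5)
Your proof is correct, and both directions of the invertibility criterion are argued exactly as in the paper. For $\mA\mG\mA=\mA$ you run the mirror image of the paper's computation: you use the right null vector to get $\mG\vu=\vx/(\vt^\top\vx)$, hence $\mG\mA=\mI-\vx\vt^\top/(\vt^\top\vx)$, whereas the paper uses the left null vector to get $\vt^\top\mG=\vy^\top/(\vy^\top\vu)$, hence $\mA\mG=\mI-\vu\vy^\top/(\vy^\top\vu)$; note also that the identity you actually apply to $\vx$ is $\mG\mB=\mI$ rather than $\mB\mG=\mI$, a harmless slip since $\mG$ is a two-sided inverse.
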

\begin{proof}
    If $\vy^\top\vu=0$, then $\vy^\top(\mA+\vu\vt^\top)={\bf0}^\top$; if $\vt^\top\vx=0$, then $(\mA+\vu\vt^\top)\vx={\bf0}$. In both cases $\mA+\vu\vt^\top$ is singular, which proves necessity. For sufficiency, let $(\mA+\vu\vt^\top)\vz={\bf0}$. Premultiplying by $\vy^\top$ gives $(\vy^\top\vu)(\vt^\top\vz)=0$, hence $\vt^\top\vz=0$ and $\mA\vz={\bf0}$. Since $\ker(\mA)=\operatorname{span}\{\vx\}$ by the rank condition, $\vz=c\vx$, and $0=\vt^\top\vz=c\,\vt^\top\vx$ gives $c=0$. Thus $\mA+\vu\vt^\top$ is invertible.

Premultiplying $(\mA+\vu\vt^\top)\mG=\mI$ by $\vy^\top$ and using $\vy^\top\mA={\bf0}^\top$ gives $\vt^\top\mG=\vy^\top/(\vy^\top\vu)$. Substituting this into $(\mA+\vu\vt^\top)\mG=\mI$ yields
\[
\mA\mG=\mI-\frac{\vu\vy^\top}{\vy^\top\vu},
\]
and multiplying on the right by $\mA$ gives $\mA\mG\mA=\mA$.
\end{proof}
The rank condition and both pairing conditions are needed. For $\mA={\bf0}\in\R^{2\times2}$ and $\vx=\vy=\vu=\vt=\ve_1$, both pairings equal one, yet $\mA+\vu\vt^\top=\ve_1\ve_1^\top$ is singular; and for $\mA=\mI-{\cal A}$ with ${\cal A}$ the two-cycle, $\vx=\vy=\one$, the choices $\vu=(1,-1)^\top,\vt=\one$ or $\vu=\one,\vt=(1,-1)^\top$ violate one pairing each and give singular perturbations.

\begin{corollary}\label{cor:canonical}
Let $\rho(\mQ)=1$, and let $\vx,\vy>{\bf0}$ satisfy $\mQ\vx=\vx$, $\vy^\top\mQ=\vy^\top$ and $\vy^\top\vx=1$. Then
\begin{equation}\label{eq:funda}
\fundmat:=(\mI-\mQ+\vx\vy^\top)^{-1}
\end{equation}
exists, is a generalized inverse of $\mI-\mQ$, and satisfies $\fundmat\vx=\vx$ and $\vy^\top\fundmat=\vy^\top$. Moreover, the group inverse of $\mI-\mQ$ is
\[
(\mI-\mQ)^{\#}=\fundmat-\vx\vy^\top .
\]
\end{corollary}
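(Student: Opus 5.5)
To prove Corollary~\ref{cor:canonical}, I will apply Lemma~\ref{lem: Generalized Inverse} with $\mA=\mI-\mQ$ and $\vu=\vx$, $\vt=\vy$. The rank condition holds by Perron--Frobenius: since $\mQ$ is irreducible with $\rho(\mQ)=1$, the eigenvalue one is algebraically simple. Hence $\ker(\mI-\mQ)=\operatorname{span}\{\vx\}$ and $\operatorname{rank}(\mI-\mQ)=n-1$. The left null vector is $\vy$. Both pairing conditions reduce to $\vy^\top\vx=1\neq0$. The lemma therefore gives existence of $\fundmat$ and shows that it is a generalized inverse of $\mI-\mQ$.

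Next I prove the eigenvector identities. From $(\mI-\mQ+\vx\vy^\top)\vx=\vx-\vx+\vx(\vy^\top\vx)=\vx$, applying $\fundmat$ gives $\vx=\fundmat\vx$. Symmetrically, $\vy^\top(\mI-\mQ+\vx\vy^\top)=\vy^\top$, so $\vy^\top=\vy^\top\fundmat$. As a consequence, $\fundmat$ commutes with $\mH=\vx\vy^\top$, and both $\fundmat\mH$ and $\mH\fundmat$ equal $\mH$.

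For the group inverse, set $\mG:=\fundmat-\mH$ and verify the three defining equations $\mA\mG\mA=\mA$, $\mG\mA\mG=\mG$ and $\mA\mG=\mG\mA$, where $\mA=\mI-\mQ$. The key identity is $\mA=\fundmat^{-1}-\mH$. Combined with $\fundmat\mH=\mH\fundmat=\mH$ and $\mH^2=\mH$, it yields
\[
\mA\fundmat=\mI-\mH=\fundmat\mA .
\]
Also $\mA\mH=\mH\mA=\mathbf{0}$, because $\mA\vx=\mathbf{0}$ and $\vy^\top\mA=\mathbf{0}^\top$. From these facts, $\mA\mG=\mG\mA=\mI-\mH$. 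It follows that $\mA\mG\mA=(\mI-\mH)\mA=\mA$. Moreover,
\[
\mG\mA\mG=(\mI-\mH)(\fundmat-\mH)=\fundmat-\mH-\mH+\mH=\mG .
\]
Uniqueness of the group inverse then identifies $\mG$ with $(\mI-\mQ)^{\#}$.

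There is no real obstacle here. The only point needing care is to justify the rank $n-1$ claim through simplicity of the Perron eigenvalue, which is where irreducibility enters.
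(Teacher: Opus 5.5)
Your proposal is correct and follows essentially the same route as the paper. It applies Lemma~\ref{lem: Generalized Inverse} with $\vu=\vx$, $\vt=\vy$ after Perron--Frobenius gives rank $n-1$, reads off $\fundmat\vx=\vx$ and $\vy^\top\fundmat=\vy^\top$, and verifies the group-inverse identities for $\fundmat-\vx\vy^\top$ using $\mA\mH=\mH\mA=\mathbf{0}$ and $\mH^2=\mH$; you simply write out the $\mG\mA\mG=\mG$ computation that the paper states without expansion.
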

\begin{proof}
By the Perron--Frobenius theorem, eigenvalue one of $\mQ$ is algebraically simple, so $\operatorname{rank}(\mI-\mQ)=n-1$. Lemma~\ref{lem: Generalized Inverse} with $\mA=\mI-\mQ$, $\vu=\vx$, $\vt=\vy$ gives existence and the generalized-inverse property, since $\vy^\top\vx=1\neq0$. From $(\mI-\mQ+\vx\vy^\top)\vx=\vx$ and $\vy^\top(\mI-\mQ+\vx\vy^\top)=\vy^\top$ we obtain $\fundmat\vx=\vx$ and $\vy^\top\fundmat=\vy^\top$. With $\mH=\vx\vy^\top$ these identities give $(\mI-\mQ)\fundmat=\fundmat(\mI-\mQ)=\mI-\mH$. Since $(\mI-\mQ)\mH=\mH(\mI-\mQ)={\bf0}$ and $\mH^2=\mH$, the matrix $\mG=\fundmat-\mH$ satisfies $(\mI-\mQ)\mG=\mG(\mI-\mQ)=\mI-\mH$, $(\mI-\mQ)\mG(\mI-\mQ)=\mI-\mQ$ and $\mG(\mI-\mQ)\mG=\mG$, which are the defining identities of the group inverse.
\end{proof}

We record three properties of $\fundmat$. (i) If $\mQ=\mP$ is an irreducible stochastic matrix, then $\vx=\one$, $\vy=\vpi$, and $\fundmat$ is the fundamental matrix $\mZ=(\mI-\mP+\mPi)^{-1}$ of \cite{kemeny1976finite}; aperiodicity is not required. (ii) If $\mQ$ is in addition primitive, then $\fundmat-\mH=\sum_{n\geq0}(\mQ^n-\mH)$, with geometric convergence; in the periodic case the series must be replaced by its Abel limit $\lim_{\gamma\uparrow1}\sum_{n\geq0}\gamma^n(\mQ^n-\mH)$, which exists and equals $\fundmat-\mH$ by the argument leading to \eqref{eq:laurent}. (iii) The normalization $\vy^\top\vx=1$ only shortens the formulas; for arbitrarily scaled Perron vectors, the projector is $\vx\vy^\top/(\vy^\top\vx)$. For $\mQ=\mP\circ\mW$, the Perron vectors of $\mQ$ cannot in general be obtained from those of $\mP$ and $\mW$.

\subsection{Alternative derivation of \eqref{sol_perron_critical}}\label{proofmsingular}
Let $\mQ$ satisfy the assumptions of Corollary~\ref{cor:canonical}. By Theorem~\ref{thm:sol_singular}, $\mM$ is finite, and by Lemma~\ref{lem:diagonals non invertible}, $\matdg{\mM}=\mI$. Equation \eqref{eq:basic recursion} therefore reduces to
\begin{equation}\label{I-Q non-invertible not solved}
(\mI-\mQ)\mM={\bf0}.
\end{equation}
The derivation below solves \eqref{I-Q non-invertible not solved} with the generalized inverse $\fundmat$ of Corollary~\ref{cor:canonical}. It yields the same formula, but it presupposes finiteness, which the Perron-transform proof establishes, and it uses the left Perron vector, which that proof does not need.

For this proof, we use the following result from linear algebra (see, for instance, \cite{hunter1, hunter3}). Using the generalized inverse $\mA^\dagger$ of a matrix $\mA$, it is easily seen that $\mX = \mA^\dagger \mB + (\mI - \mA^\dagger \mA)\mU$, where $\mU$ is an arbitrary matrix, solves $\mA \mX = \mB$, provided that $\mA^\dagger$ is \textit{consistent} with $\mA$, which means $(\mI - \mA \mA^\dagger)\mB = 0$.

In our case \eqref{I-Q non-invertible not solved}, $\fundmat$ is trivially consistent, since $\mB = 0$. Hence, the solution to \eqref{I-Q non-invertible not solved} is
\begin{equation}\label{eq:m1}
    \mM = (\mI - \fundmat (\mI -\mQ)) \mU,
\end{equation}
where $\mU$ is an arbitrary matrix. In fact, $\mU$ is not an arbitrary matrix at all and is bound to the diagonals $\matdg{\mM}$ we derived earlier. Seeing these diagonals as ``constraints'', we can derive what matrix $\mU$ recovers the correct diagonals.

Now, from
\[
    \fundmat (\mI - \mQ + \vx\vy^\top) = \mI,
\]
we get
\[
    \fundmat (\mI - \mQ) = \mI - \fundmat \vx\vy^\top,
\]
and hence,
\[
    \mI - \fundmat(\mI - \mQ) = \fundmat \vx\vy^\top.
\]
Therefore, we obtain
\begin{align}\label{eq U not solved}
    \mM = \fundmat \vx\vy^\top \mU.
\end{align}
Since $\fundmat \vx$ is a column vector and $\vy^\top \mU$ is a row vector,
\[
    \mM(i,i) = (\fundmat \vx)_i \cdot (\vy^\top \mU)_i,
\]
and thus,
\[
    (\vy^\top \mU)_i = \frac{1}{(\fundmat \vx)_i}.
\]
In matrix notation, this means that
\begin{align}\label{U matrix}
    \vy^\top \mU = \one^\top \diag(\fundmat \vx)^{-1}.
\end{align}

Substituting \eqref{U matrix} back into \eqref{eq U not solved}, we obtain the final result,
\begin{align*}
    \mM = \fundmat \vx \cdot \one^\top \diag(\fundmat \vx)^{-1}.
\end{align*}
Entry-wise, this reads as
\[
    \mM(i,j) = \frac{(\fundmat \vx)_i}{(\fundmat \vx)_j},
\]
similar to the expressions found in the invertible case. Note that $(\mI - \mQ + \vx \vy^\top)\vx = \vx$ and thus, since $\vy^\top \vx = 1$, we find $\fundmat \vx = \vx$. Thus, the expression reduces to
\[
    \mM(i,j) = \frac{\vx_i}{\vx_j},
\]

and, indeed, for the diagonals, we find
\[
    \mM(i,i) = \frac{\vx_i}{\vx_i} = 1
\]
for all nodes $i \in {\cal V}$. This concludes the proof.

\section{Proof of Theorem~\ref{thm:sdq}}\label{app: proof theorem 3}
Let $\mP$ be the transition matrix of a finite irreducible Markov chain, with stationary law $\vpi>{\bf0}$, $\mPi=\one\vpi^\top$, fundamental matrix $\mZ=(\mI-\mP+\mPi)^{-1}$ and deviation matrix $\mD=\mZ-\mPi$. No aperiodicity is assumed.

{\bf Algebraic proof of \eqref{eq:mfpt}.} On a punctured real neighborhood of $\gamma=1$ that contains no reciprocal eigenvalue of $\mP$ other than $1$, we have $\mM(i,j;\gamma)=\mR(i,j;\gamma)/\mR(j,j;\gamma)$ for $i\neq j$ and $\mM(j,j;\gamma)=1-1/\mR(j,j;\gamma)$. Write $t=1-\gamma$ and $\partial_\gamma\mM(i,j;1^-)$ as a shorthand notation for $\left.\frac{d}{d\gamma }\mM(i,j;\gamma )\right|_{\gamma=1^-}$. By \eqref{eq:laurent},
\[
\mR(i,j;\gamma)=\frac{\vpi(j)}{t}+\mD(i,j)+{\cal O}(t).
\]
Hence, for $i\neq j$,
\begin{align*}
\mM(i,j;\gamma)&=\frac{\vpi(j)+t\,\mD(i,j)+{\cal O}(t^2)}{\vpi(j)+t\,\mD(j,j)+{\cal O}(t^2)}\\
&=1+\frac{t}{\vpi(j)}\big(\mD(i,j)-\mD(j,j)\big)+{\cal O}(t^2),
\end{align*}
and, for the diagonal,
\begin{align*}
\mM(j,j;\gamma)&=1-\frac{t}{\vpi(j)+t\,\mD(j,j)+{\cal O}(t^2)}\\
&=1-\frac{t}{\vpi(j)}+{\cal O}(t^2).
\end{align*}
Since $\mM(i,j;\cdot)$ is analytic for $\gamma \uparrow 1$, these expansions identify its Taylor coefficients there. With $\partial_\gamma=-\partial_t$ and $\mD(j,j)-\mD(i,j)=\mZ(j,j)-\mZ(i,j)$ (the $\mPi$-terms cancel), we obtain $\partial_\gamma\mM(i,j;1^-)=(\mZ(j,j)-\mZ(i,j))/\vpi(j)$ for $i\neq j$ and $\partial_\gamma\mM(j,j;1^-)=1/\vpi(j)$. These are the classical expressions for $\mathbb E_i\tau_j^+$ \citep{kemeny1976finite} and, on the diagonal, Kac's formula for the mean return time. The derivative is two-sided.

{\bf [Probabilistic Proof:]}
The weight of a first-hit path $\sigma\in\mathcal{P}(i,j)$ of length $l(\sigma)$ under the scaled matrix $\gamma\bm{P}$ equals $\gamma^{l(\sigma)}\prod_k \bm{P}(i_k,j_k)$. Summing over $\mathcal{P}(i,j)$ therefore gives
\begin{equation}
\bm{M}(i,j; \gamma)
  =  \! \! \! \sum_{\sigma\,\in\,\mathcal{P}(i,j)}
\gamma^{l(\sigma)}\prod_{k=1}^{l(\sigma)}\bm{P}(i_k,j_k)
  \;=\; \mathbb{E}\!\left[\gamma^{\tau_{ij}}\right], \nonumber
\end{equation}
where $\tau_{ij}:=\inf\{n\geq 1: X_n=j \mid X_0 = i\}$ is the first-passage time of the chain $(X_n)$ governed by $\bm{P}$ and started at $i$.
In fact, $\gamma\mapsto\bm{M}(\gamma;i,j)$ is the probability generating function of $\tau_{ij}$,
and it holds
\begin{align}
  \bm{M}(i,j ; 1)
&=\mathbb{E}\!\left[1^{\tau_{ij}}\right]
     = \mathbb{P}(\tau_{ij}<\infty)
     = 1, \label{eq:pgf-value}\\[4pt]
  \left.\frac{{d}}{{d}\gamma}
    \bm{M}(i,j; \gamma)\right|_{\gamma=1^-} \! \!\ \! \! \! 
    & = \left. \mathbb{E} \!\left[\tau_{ij}\gamma^{\tau_{ij}-1}\right] \right |_{\gamma=1}
     \! \! \!
     = \mathbb{E}[\tau_{ij}]
     =  \! \bm{L}(i,j). \label{eq:pgf-deriv}
\end{align}
Equation~\eqref{eq:pgf-value} is the statement of Theorem~\ref{thm:sol_singular} restricted to $\bm{Q}=\bm{P}$: the multiplicative matrix $\bm{M}$ equals $\bm{1}\bm{1}^\top$ for any ergodic Markov chain, encoding the certainty of hitting every state. 
Equation~\eqref{eq:pgf-deriv} is precisely the result from the first part of the theorem:
the classical, additive mean first-passage time $\bm{L}(i,j)=\mathbb{E}[\tau_{ij}]$ emerges only upon differentiation. 

\section{Proof of Lemma~\ref{lem:pos_corr}}\label{app:pos_cor_lem}
    To distinguish multiple occurrences of the same edge along a path, we write $(e,k)\in\sigma$ if edge $e$ is the $k$-th edge traversed by $\sigma$. Hence, $\sigma\cap\sigma'$ denotes the set of edge-traversal pairs shared by $\sigma$ and $\sigma'$. For edge $ e $, we write $ w ( e , k ) $ for the weight of edge $ e$ at the $k$th traversal (so as not to confuse it with the notation $ {\cal W} ( i , j ) $ for the edge weight of edge $ e = ( i , j ) $).
    
Under Assumption~\ref{ass:edge_share}, a shared edge traversal may use the same realization of the stochastic edge weight on both paths, while all other realizations are independent. Therefore,
\begin{align*}
& \mathbb{E} [ Y_\sigma Y_{\sigma'} ]
\\
& = 
\mathbb{E} \left [ \Pi_{ (e, k )  \in \sigma \cap \sigma'} (w ( e , k ) )^{2}
\Pi_{(e , k ) \in (\sigma \cap \sigma' )^c } w ( e , k ) \right ]
\\
& = 
\Pi_{ ( e , k )  \in \sigma \cap \sigma'}  \mathbb{E} \left [  ( w ( e , k ) )^{2 }  \right ]
\Pi_{ ( e , k ) \in (\sigma \cap \sigma' )^c}
\mathbb{E}^2 \left [ w( e , k ) \right ]
\end{align*}
We can now use the fact that 
\[
\Pi_{ ( e , k ) \in \sigma \cap \sigma'}  
\mathbb{E} \left [  ( w ( e, k ) )^2  \right ] \geq 
\Pi_{ ( e , k ) \in \sigma \cap \sigma'}  
\mathbb{E}^2 \left [  ( w ( e, k  ) )  \right ],
\]
to show that
\[
\mathbb{E} [ Y_\sigma Y_{\sigma'} ]
\geq 
\mathbb{E} [ Y_\sigma ] 
\mathbb{E}[Y_{\sigma'} ] .
\]
Equivalently,
$
\operatorname{cov}\left(Y_\sigma,Y_{\sigma'}\right)\geq 0.
$

If the stochastic weight realizations used by $\sigma$ and $\sigma'$ are independent, no common realization occurs in the product, and the above inequality holds with equality. This concludes the proof.

{\section{Results for the Financial Network}\label{app:res_fin}
\begin{table}[h!]
    \centering
    \caption{Network-effective exposure $\mM$ at $\gamma = 1$ for the
    network of Figure~\ref{fig:finnet}. Entry $(i,j)$ is the loss landing
    on bank $j$ from a unit shock at bank $i$; the margins report each
    bank's systemic footprint (row sum) and vulnerability (column sum).}
    \begin{tabular}{c|ccc|c}
                     & $j{=}1$ & $j{=}2$ & $j{=}3$ & footprint \\ \hline\hline
        $i{=}1$      & 0.41 & 0.64 & 0.77 & 1.82 \\
        $i{=}2$      & 0.58 & 0.37 & 0.79 & 1.74 \\
        $i{=}3$      & 0.20 & 0.13 & 0.15 & 0.48 \\ \hline
        vulnerability & 1.19 & 1.14 & 1.71 &      \\
    \end{tabular}
    \label{tab:finM}
\end{table}}
\end{appendices}

\end{document}